\DeclareMathOperator*{\argmin}{argmin}
\newtheorem{conjecture}{Conjecture}
\newtheorem*{problem}{Problem}
\newtheorem{lemma}{Lemma}
\newtheorem{theorem}{Theorem}
\newtheorem{proposition}{Proposition}
\newtheorem{corollary}{Corollary}
\begin{document}

\begin{frontmatter}
\title{On the uniqueness of the maximum parsimony tree for data with up to two substitutions: an extension of the classic Buneman theorem in phylogenetics}

\author{Mareike Fischer\corref{cor1}}
\ead{email@mareikefischer.de}
\cortext[cor1]{Corresponding author}

\address{Institute of Mathematics and Computer Science, Greifswald University, Greifswald, Germany}

\begin{abstract} One of the main aims of phylogenetics is the reconstruction of the correct evolutionary tree when data concerning the underlying species set are given. These data typically come in the form of DNA, RNA or protein alignments, which consist of various characters (also often referred to as sites). Often, however, tree reconstruction methods based on criteria like maximum parsimony may fail to provide a unique tree for a given dataset, or, even worse, reconstruct the `wrong' tree (i.e. a tree that differs from the one that generated the data). On the other hand it has long been known that if the alignment consists of all the characters that correspond to edges of a particular tree, i.e. they all require exactly $k=1$ substitution to be realized on that tree, then this tree will be recovered by maximum parsimony methods. This is based on Buneman's theorem in mathematical phylogenetics. 
It is the goal of the present manuscript to extend this classic result as follows: We prove that if an alignment consists of all characters that require exactly $k=2$ substitutions on a particular tree, this tree will always be the unique maximum parsimony tree (and we also show that this can be generalized to characters which require at most $k=2$ substitutions). In particular, this also proves a conjecture based on a recently published observation by Goloboff et al. affirmatively for the special case of $k=2$.

\end{abstract}

\begin{keyword}
maximum parsimony \sep Buneman theorem \sep $X$-splits
 \end{keyword}
\end{frontmatter}


\section{Introduction}

Mathematical phylogenetics is concerned with reconstructing the evolutionary relationships of a species set $X$ based on data. Traditionally, these relationships are represented by a phylogenetic tree and the data comes in the form of an alignment (e.g. aligned DNA, RNA or proteins or aligned binary sequences like absence or presence of certain morphological characteristics), whose columns are also often referred to as characters or sites. While no tree reconstruction method can guarantee to recover the true tree for all data sets, it has long been known that in some special cases a tree can be uniquely recovered. One such example is due to the classic theorem by Buneman \citep{Buneman1971}. 

The Buneman theorem at first glance has nothing to do with data. It states that any list of compatible $X$-splits corresponds to precisely one phylogenetic tree $T$. Here, an $X$-split, which is a bipartition of the species set, can be regarded as an edge of the tree (because each edge splits the species set into two disjoint and non-empty subsets). Now if you encode these $X$-splits as binary characters (where species in the same subset are assigned the same state) and summarize them in an alignment which we call $A_1(T)$, it is immediately clear that the Buneman theorem states that these binary data now correspond to a unique tree, namely $T$. 

Moreover, it has long been known that this unique tree can be recovered even by simple methods like those  based on the maximum parsimony principle. This is due to the fact that these characters are all compatible and therefore the unique tree which exists due to Buneman is a perfect phylogeny for the data, i.e. a tree which is compatible with all characters under consideration \citep[p. 69]{Semple2003}. This result still holds if constant characters are added to the data, i.e. characters which assign the same state to all species. 

The above mentioned maximum parsimony principle seeks the tree which requires as few character state changes along its edges as possible, i.e. in this sense it tries to minimize the number of mutations/substitutions needed to explain the evolution of the species set under investigation. For the particular alignment as constructed above, which consists precisely of the characters induced by the edges of a particular tree $T$ plus possibly some constant characters, this means that maximum parsimony would find the correct tree as, by the Buneman theorem, $T$ is the only tree that can represent all of these binary characters with precisely one change (which is best possible for a binary character) and the constant characters with 0 changes. If we denote the set of constant characters by $A_0$ and the concatenation of $A_0$ and $A_1(T)$ with $A_0 . A_1(T)$, then the unique maximum parsimony tree for $A_0 . A_1(T)$ (as well as for $A_1(T)$) is $T$. 

Mathematically, it is a natural question if this result can be generalized to $A_k(T)$, where $k$ denotes the number of changes the characters in this alignment require on $T$. It has been recently conjectured that this is indeed possible as long as $k<\frac{n}{4}$, where $n$ denotes the number of species under investigation \citep{pablo}. Biologically, this question is of interest because maximum parsimony is often assumed to be justified when the number of evolutionary events like substitutions are rare (c.f. for instance \citep[Chapter 5]{Semple2003}). So if we consider all characters that have precisely $k$ or at most $k$ changes on a given binary phylogenetic tree $T$, can maximum parsimony then recover the tree from this data set? Answering this question could also shed more light on conditions like e.g. the distribution of homoplasy required for maximum parsimony trees to coincide with maximum likelihood trees \citep{pablo}.

In this manuscript, we focus on the special case of $k=2$ and answer the question affirmatively. We do this by first extending the Buneman theorem (or, more precisely, the above described aspect of the Buneman theorem which deals with the recoverability of the correct tree) from binary characters with one change to binary characters with two changes. We also show that the Buneman theorem cannot be further extended in the same way to more than two changes, which is why a proof of the conjecture for $k\geq 3$ will require a different approach.

\section{Preliminaries}
\subsection{Notation}

We start with some notation. Recall that a \emph{phylogenetic tree} $T=(V,E)$ on a species or taxon set $X$ is a connected acyclic graph with vertex set $V$ and edge set $E$ whose leaves are bijectively labelled by $X$. We may assume without loss of generality that $X=\{1, \ldots, n\}$. A phylogenetic tree $T$ is called \emph{binary} if all inner nodes have degree 3. Throughout this manuscript, unless stated otherwise, when we refer to a tree $T$, we always mean a binary phylogenetic $X$-tree. However, while the trees we are interested in are unrooted, for technical reasons we sometimes also have to consider rooted trees: When an edge is removed from a binary (unrooted) tree, two subtrees remain, both of which have precisely one node of degree 2. This node is considered the {\em root} of the respective subtree. In a rooted binary phylogenetic tree, the two trees that you obtain when you delete the root node and both edges adjacent to the root are called {\em maximal pending subtrees} of this rooted tree. Figure \ref{treedecomp} illustrates these notions.

\begin{figure} 
\center
\scalebox{.17}{\includegraphics{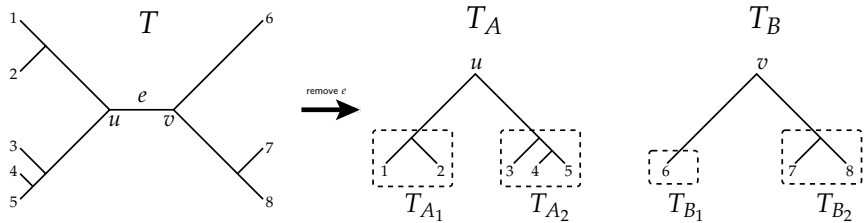}}
\caption{  \scriptsize By removing an edge $e$ from an unrooted phylogenetic tree $T$, it is decomposed into two rooted subtrees, $T_A$ and $T_B$. If, as in this figure, both of them consist of more than one node, then we can further decompose them into their two maximal pending subtrees, $T_{A_1}$ and $T_{A_2}$ or $T_{B_1}$ and $T_{B_2}$, respectively. }
\label{treedecomp}
\end{figure}

In the present manuscript, we also need the concept of distances between leaves. We say that two leaves $x$ and $y$ are at distance $d$ in a binary phylogenetic tree $T$, i.e. $d_T(x,y)=d$, whenever the unique path from $x$ to $y$ in $T$ consists of $d$ edges. Moreover, we say that two leaves $v$ and $w$ form a \emph{cherry} $[v,w]$, if $v$ and $w$ are adjacent to the same inner node $u$ of $T$. In this case, $u$ is also called the \emph{parent} of $v$ and $w$. Note that for a cherry $[v,w]$, we have $d_T(v,w)=2$, and that, on the other hand, if $v$ and $w$ do not form a cherry, we necessarily have $d_T(v,w)>2$. 

Furthermore, recall that a bipartition $\sigma$ of $X$ into two non-empty disjoint subsets $A$ and $B$ is often called \emph{$X$-split}, and is denoted by $\sigma=A|B$. Recall that there is a  natural relationship between $X$-splits and the edges of a phylogenetic $X$-tree $T$, because the removal of an edge $e$ induces a bipartition of $X$. In the following, the set of all such induced $X$-splits of $T$ will be denoted by $\Sigma(T)$. Recall that for a binary phylogenetic $X$-tree $T$ with $|X|=n$ we have $|\Sigma(T)|=2n-3$ \citep[Prop. 2.1.3]{Semple2003}. Moreover, note that the {\em size of an $X$-split $\sigma=A|B$} is defined as $|\sigma|=\min\{|A|,|B|\}$ \citep{Fischer2015a}. An $X$-split of size 1 is called {\em trivial}. Given a set of $X$-splits, an element of this set with minimal size is called a {\em minimal split}. 

Now that we have introduced the concept of a tree, we need to introduce the data. The data comes in the form of \emph{characters}, where a character $f$ is a function from the taxon set $X$ to a set $\mathcal{C}$ of character states, i.e. $f: X \rightarrow \mathcal{C}$. Note that a finite sequence of characters is also often referred to as \emph{alignment} in biology. In this case, the characters form the columns of an alignment and are also often called `sites'.
In this manuscript, we will only be concerned with \emph{binary characters}, i.e. without loss of generality $\mathcal{C}= \{a,b\}$. Instead of writing $f(1)=a$, $f(2)=a$, $f(3)=b$ and $f(4)=b$, we use the short form $f=aabb$. There is a close relationship between $X$-splits and binary characters, because every $X$-split can be represented by a binary character by assigning the same state to taxa in the same subset. For instance, if $\sigma= 12 | 34$, then characters $f_1=aabb$ and $f_2=bbaa$ would correspond to $\sigma$. If an $X$-split $\sigma_e$ is induced by an edge $e$ of a phylogenetic $X$-tree in the manner explained above, we also say that the corresponding binary character is induced by $e$. If an $X$-split is trivial, it must correspond to an edge that leads to a leaf of any tree, because it only separates one taxon from the other taxa. 

It is important to note that in this manuscript, whenever two binary characters refer to the same $X$-split, we regard them as identical. This means that we do not distinguish between $f=aabb$ and $f=bbaa$, for instance. Therefore, throughout this manuscript, we assume for technical reasons and without loss of generality that $f(1)=a$. 

\par\vspace{0.5cm}
So a character $f: X \rightarrow \{a,b\}$ assigns states to all leaves of the tree. If the inner nodes of a tree are also to be assigned states, we need an \emph{extension} of the character. An extension of a binary character $f$ on a phylogenetic tree $T$ with vertex set $V$ is a map $g: V \rightarrow \{a,b\}$ such that $g(x)=f(x)$ for all $x \in X$. Moreover, we call $ch(g) = \vert \{ (u,v) \in E, \, g(u) \neq g(v)\} \vert$ the \emph{changing number} of $g$ on $T$.

\par\vspace{0.5cm}
Now that we have established the kind of data we consider, namely characters, as well as the object we want to reconstruct, namely phylogenetic trees, we need to introduce a method to do just that. This manuscript uses the \emph{maximum parsimony} principle to infer trees from characters: Given a character $f$, the idea of maximum parsimony is to find a phylogenetic tree $T$ that minimizes the so-called \emph{parsimony score} $l(f,T)$ of $f$, where $l(f,T) = \min\limits_{g} ch(g,T)$ and where the minimum runs over all extensions $g$ of $f$ on $T$. The parsimony score of an alignment $A=\{f_1,\ldots, f_m\}$ is then defined as: $l(A,T)=\sum\limits_{i=1}^m l(f_i,T)$. Moreover, a \emph{maximum parsimony tree} $T$ of an alignment $A$ is defined as $T=\argmin_{\tilde{T}} l(A,\tilde{T})$. 

For a given tree $T$ and a character $f$, an extension $g$ that minimizes the parsimony score of $f$ on $T$ is called a \emph{most parsimonious extension} or sometimes also a  \emph{minimal extension}. There are several well-known algorithms to calculate the parsimony score for a given phylogenetic tree and a given character. For instance, the well-known Fitch algorithm can be used \citep{Fitch}. This algorithm works in polynomial time, i.e. finding the parsimony score of a character on a tree (which is often referred to as the `small parsimony problem') is easy.

However, the so-called `big parsimony problem', namely finding a maximum parsimony tree for an alignment, is known to be NP-complete \citep{foulds_graham_1982}. Moreover, there may be more than one maximum parsimony tree for an alignment, i.e. the maximum parsimony tree need not always be unique. 

\par\vspace{0.5cm}

In this manuscript, given a phylogenetic tree $T$, we are concerned with finding maximum parsimony trees for the alignment $A_k(T)$, which we define to be the set consisting of all binary characters that have parsimony score $k$ on $T$. In particular, we will consider the case $k=2$ and show that $T$ is the unique maximum parsimony tree of $A_2(T)$. Note that $A_1(T)$ corresponds to all characters induced by the edges of $T$, and that $A_0(T)$ consists precisely of $f=aa\ldots a$, i.e. the constant character, as this is the only character that has parsimony score 0. Moreover, as $A_0(T)=A_0(\tilde{T})$ for all $\tilde{T}$ on the same taxon set $X$, we usually write $A_0$ instead of $A_0(T)$. Note that when we concatenate character disjoint alignments like $A_1(T)$ and $A_0$, we denote this concatenation, i.e. the union of the character sets, by a dot, e.g. $A_0 . A_1(T)$. As we regard alignments merely as sets of characters, the order of the characters in the set does not matter. This is different in many more specialized biological models, where the exact position of a character in an alignment might have an impact on tree reconstruction, e.g. when different rates across sites are considered (c.f. for instance \citep{Susko2003}). However, as maximum parsimony makes no such assumptions, we do not need this restriction in the present manuscript.

\subsection{Known results}

A basic result that we need throughout this manuscript is the following theorem, which counts the number of characters in $A_k(T)$.

\begin{theorem} \label{thm:lengthAk} \citep{Steel1993paper,Book_Steel}
Let $T$ be a binary phylogenetic $X$-tree with $|X|=n$. Then, we have:
$$|A_k(T)|= \frac{1}{2} \cdot \frac{2n-3k}{k}{n-k-1\choose k-1} \cdot 2^k=\frac{2n-3k}{k}{n-k-1\choose k-1} \cdot 2^{k-1}.$$
\end{theorem}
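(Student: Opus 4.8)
The plan is to convert the statement into a coefficient-extraction problem for a single generating function and to evaluate that function by a bottom-up dynamic program based on Fitch's algorithm \citep{Fitch}. Writing $G_T(x) = \sum_{f} x^{l(f,T)}$, where the sum runs over all $2^n$ state assignments $f\colon X\to\{a,b\}$ (not yet identifying $f$ with its complement $\bar f$), the identification $f\sim\bar f$ together with the fact that $l(f,T)=0$ holds exactly for the two constant assignments gives $G_T(x) = 2 + \sum_{k\ge 1} 2\,|A_k(T)|\, x^k$, so that $|A_k(T)| = \tfrac12\,[x^k]\,G_T(x)$ for $k\ge 1$; this already explains the factor $\tfrac12$ in the formula. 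To compute $G_T$ I would root $T$ by subdividing an edge, as in Figure \ref{treedecomp}, and run Fitch's algorithm from the new root, recording for every rooted subtree the number of leaf-colorings refined by parsimony score (the exponent of $x$) and by the Fitch set at the subtree root. Using the $a\leftrightarrow b$ symmetry, two polynomials suffice: $a_m(x)$ counting colorings whose root Fitch set is a fixed singleton and $c_m(x)$ counting those whose root Fitch set is $\{a,b\}$, where $m$ is the number of leaves of the subtree. The Fitch update at an internal node combining subtrees with leaf counts $m_1,m_2$ reads $a_{m_1+m_2}=a_{m_1}a_{m_2}+a_{m_1}c_{m_2}+c_{m_1}a_{m_2}$ and $c_{m_1+m_2}=c_{m_1}c_{m_2}+2x\,a_{m_1}a_{m_2}$, the factor $x$ recording the single extra change forced when the two root Fitch sets are disjoint.

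The crucial point — and what I expect to be the main obstacle — is to show that $a_m$ and $c_m$ depend only on the leaf count $m$ and not on the shape of the subtree, so that a single recursion in $m$ determines everything. I would prove this by strong induction on $m$: assuming the claim for all smaller subtrees, the combination rule above expresses $a_m,c_m$ through $a_{m_1},c_{m_1},a_{m_2},c_{m_2}$, and one must check that the result is the same for every split $m_1+m_2=m$. Specializing to $m_2=1$ (where $a_1=1$, $c_1=0$) yields the clean linear recurrences
\[
a_{m+1}=a_m+2x\,a_{m-1},\qquad c_{m+1}=2x\,a_m,
\]
with $a_1=a_2=1$. The required split-independence then reduces to the single Fibonacci-type addition identity $a_{M+N}=a_M\,a_{N+1}+2x\,a_{M-1}\,a_N$ (with the convention $a_0=0$), which I would establish by a short induction and which, after substituting $c_m=2x\,a_{m-1}$, implies both combination identities simultaneously.

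With topology-independence in hand, solving $a_{m+1}=a_m+2x\,a_{m-1}$ (for instance via its characteristic roots $\tfrac12(1\pm\sqrt{1+8x})$, or by verifying a coefficientwise guess) gives the closed form $a_m=\sum_{j\ge0}\binom{m-1-j}{j}2^j x^j$ and $c_m=2x\,a_{m-1}$. Rooting at a pendant edge, so that one side is a single leaf, collapses $G_T$ to $G_T(x)=2\,a_n(x)+c_n(x)$, and extracting the coefficient of $x^k$ yields
\[
|A_k(T)|=\binom{n-k-1}{k}2^k+\binom{n-k-1}{k-1}2^{k-1}.
\]
The final step is the routine binomial simplification: factoring out $2^{k-1}$ and using $\binom{n-k-1}{k}=\tfrac{n-2k}{k}\binom{n-k-1}{k-1}$ turns the bracketed sum into $\tfrac{2(n-2k)+k}{k}\binom{n-k-1}{k-1}=\tfrac{2n-3k}{k}\binom{n-k-1}{k-1}$, so that $|A_k(T)|=\tfrac{2n-3k}{k}\binom{n-k-1}{k-1}2^{k-1}$, the claimed expression. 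As a sanity check I would verify $k=1$, where the formula must return $|A_1(T)|=2n-3=|\Sigma(T)|$, the number of edges of $T$.
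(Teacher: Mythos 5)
The paper does not actually prove this theorem: it imports it from \citep{Steel1993paper,Book_Steel}, and its only original remark is that the factor $\tfrac12$ accounts for identifying each character $f$ with its complement (the convention $f(1)=a$). Your proposal therefore supplies a proof where the paper has none, and the proof is correct. The bookkeeping $G_T(x)=2+\sum_{k\ge 1}2|A_k(T)|x^k$ handles the $\tfrac12$ exactly as the paper's remark requires; the two Fitch recursions $a_{m_1+m_2}=a_{m_1}a_{m_2}+a_{m_1}c_{m_2}+c_{m_1}a_{m_2}$ and $c_{m_1+m_2}=c_{m_1}c_{m_2}+2x\,a_{m_1}a_{m_2}$ are the correct and complete case analysis of intersection versus union events (using the $a\leftrightarrow b$ symmetry to set $b_m=a_m$); and your reduction of shape-independence to the single addition identity $a_{M+N}=a_M a_{N+1}+2x\,a_{M-1}a_N$ is exactly right, since substituting $c_m=2x\,a_{m-1}$ into both combination rules recovers precisely this identity (for the $c$-rule, with $M=m_1$, $N=m_2-1$), and the identity itself follows by the standard induction for second-order linear recurrences with $a_0=0$, $a_1=1$. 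The closed form $a_m=\sum_{j\ge 0}\binom{m-1-j}{j}2^jx^j$ satisfies the recurrence coefficientwise by Pascal's rule, and the final extraction $|A_k(T)|=\binom{n-k-1}{k}2^k+\binom{n-k-1}{k-1}2^{k-1}=\frac{2n-3k}{k}\binom{n-k-1}{k-1}2^{k-1}$ checks out; it also reproduces the two values the paper actually uses, $2n-3$ at $k=1$ and $2(n-3)^2$ at $k=2$. What your route buys, beyond self-containedness: it isolates the genuinely surprising content of the theorem --- that $|A_k(T)|$ depends only on $n$ and not on the shape of $T$ --- in one clean Fibonacci-type identity, whereas the paper simply inherits this fact from the citation. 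Two points you should make explicit in a full write-up: that the total number of union events in Fitch's algorithm equals the parsimony score (Fitch's theorem, which justifies reading the exponent of $x$ as $l(f,T)$), and that after subdividing an edge every internal node of the rooted tree, including the root, has exactly two children, so the combination rule applies at every step.
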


Note that in the original version (cf. \citep[p. 101, eq. (5.7)]{Book_Steel}), the formula does not contain the factor $\frac{1}{2}$, which is due to the fact that the authors there count {\em all} binary characters with score $k$ on $T$, whereas $A_k(T)$ by our definition only contains those for which $f(1)=a$. 

Now, the most important case for this manuscript is $k=2$, in which case Theorem \ref{thm:lengthAk} gives $|A_2(T)|=2(n-3)^2$, where $n$ denotes the number of leaves of $T$. We will need this formula later on.

The most important theorem on which this manuscript is based is the following classic theorem by Buneman  \citep{Buneman1971} (see also \citep[p. 44]{Semple2003}).

\begin{theorem}[Buneman]\label{thm:buneman}
Let $T$ and $\tilde{T}$ be two binary phylogenetic $X$-trees. 
Then, $T=\tilde{T}$ if and only if $\Sigma(T)=\Sigma(\tilde{T})$. 
\end{theorem}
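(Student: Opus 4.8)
The plan is to prove the nontrivial direction---that equality of the split systems forces equality of the trees---by induction on $n=|X|$, using a purely split-theoretic description of cherries to drive a leaf-deletion argument. The reverse implication is immediate: if $T=\tilde{T}$ they have the same edge set, and each edge induces the same bipartition in both, so $\Sigma(T)=\Sigma(\tilde{T})$.

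First I would record a characterization of cherries in terms of splits alone. Fix two leaves $x,y$. If $[x,y]$ is a cherry with parent $u$, then removing any edge other than the two pendant edges at $x$ and $y$ leaves $x$ and $y$ on the same side; hence the only splits in $\Sigma(T)$ separating $x$ from $y$ are the trivial splits $\{x\}|X\setminus\{x\}$ and $\{y\}|X\setminus\{y\}$. Conversely, if $[x,y]$ is not a cherry then $d_T(x,y)>2$, so the path between them contains an interior edge whose removal yields a \emph{nontrivial} split separating $x$ and $y$. Thus $[x,y]$ is a cherry in $T$ if and only if no nontrivial split of $\Sigma(T)$ separates $x$ and $y$. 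Because this condition refers only to $\Sigma(T)$, the hypothesis $\Sigma(T)=\Sigma(\tilde{T})$ immediately yields that $T$ and $\tilde{T}$ have exactly the same cherries.

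For the induction I would take the base cases $n\leq 3$, where there is (up to labelling) a unique binary phylogenetic $X$-tree, so that the statement holds trivially. For $n\geq 4$, pick a cherry $[x,y]$ of $T$; by the previous paragraph it is also a cherry of $\tilde{T}$. Let $X'=X\setminus\{y\}$ and let $T'$, $\tilde{T}'$ be the trees obtained from $T$, $\tilde{T}$ by deleting the leaf $y$ and suppressing the resulting degree-$2$ vertex. The key step is to verify that passing from $T$ to $T'$ acts on split systems in a way that depends only on the split system and on $y$: the split $\{y\}|X\setminus\{y\}$ is removed, the split $\{x,y\}|X\setminus\{x,y\}$ becomes the trivial split $\{x\}|X'\setminus\{x\}$, and every remaining split has $y$ deleted from whichever block contains it. Since this recipe is identical for both trees and $\Sigma(T)=\Sigma(\tilde{T})$, we obtain $\Sigma(T')=\Sigma(\tilde{T}')$, and the inductive hypothesis gives $T'=\tilde{T}'$. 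Finally, because $[x,y]$ is a cherry in both $T$ and $\tilde{T}$, each tree is recovered from its reduced tree by subdividing the pendant edge at $x$ and attaching $y$; as $T'=\tilde{T}'$ and $y$ is reattached as the sibling of the same leaf $x$, we conclude $T=\tilde{T}$.

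The main obstacle is the bookkeeping in the leaf-deletion step: one must check that the induced map on split systems is well defined and genuinely tree-independent---so that equal split systems remain equal after deleting $y$---and that suppressing the degree-$2$ vertex keeps everything inside the class of binary phylogenetic trees (in particular that the cherry supplies the block $\{x,y\}|X\setminus\{x,y\}$, which is nonempty precisely because $n\geq 4$). Everything else is routine once the split-theoretic description of cherries is in hand.
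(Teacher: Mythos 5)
Your proof is correct, but there is nothing in the paper to compare it against: the paper states this result as the classical Buneman theorem and cites Buneman (1971) and Semple and Steel (2003, p.~44) without giving any proof of its own. Taken on its own terms, your argument is sound. The split-theoretic characterization of cherries is right: in a cherry the $x$--$y$ path consists of the two pendant edges, so only trivial splits separate $x$ from $y$, whereas if $d_T(x,y)>2$ the path contains an interior edge, and both sides of an interior edge of a binary tree contain at least two leaves, giving a nontrivial separating split. The leaf-deletion bookkeeping also checks out: $\Sigma(T')$ consists exactly of the restrictions to $X\setminus\{y\}$ of the splits in $\Sigma(T)$ other than $\{y\}\,|\,X\setminus\{y\}$, a recipe that uses only $\Sigma(T)$ and $y$, so equal split systems stay equal after deletion; and re-attaching $y$ as the sibling of $x$ is deterministic, so $T'=\tilde{T}'$ forces $T=\tilde{T}$. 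Two points deserve explicit mention in a polished write-up: you should justify that a binary tree on $n\geq 4$ leaves has a cherry at all (a standard fact; the paper invokes the stronger statement of Semple and Steel, Prop.~1.2.5, in its own induction), and you should note that trivial splits are recognizable inside $\Sigma(T)$ as precisely those with a singleton block, which is what makes ``no nontrivial separating split'' a property of the split system alone. Compared with the classical route, your proof is more elementary but narrower: Buneman's theorem is usually obtained from the Splits Equivalence Theorem, in which any pairwise compatible split system containing all trivial splits corresponds to a unique, not necessarily binary, phylogenetic tree; your induction instead leans on binarity and on having the two full split systems equal. It is also worth observing that your skeleton---identify a cherry from the data, delete a leaf, apply the inductive hypothesis, re-attach---is exactly the strategy the paper itself uses for its main result, Theorem~\ref{casek2}.
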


In particular, we will consider the following corollary, which is a direct consequence of Theorem \ref{thm:buneman}.

\begin{corollary}\label{cor:buneman}
Let $T$ be a binary phylogenetic $X$-tree. Then, $T$ is the unique maximum parsimony tree for the alignment $A_1(T)$.
\end{corollary}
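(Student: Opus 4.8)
The plan is to exploit the fact that every character in $A_1(T)$ is non-constant, so that $T$ attains the smallest parsimony score that is even conceivable for this alignment, and then to argue that any other tree attaining this score must share all of $T$'s splits and hence, by Buneman's theorem, equal $T$. First I would establish the key dictionary: a binary character $f$ has parsimony score exactly $1$ on a tree $\tilde{T}$ if and only if $f$ is induced by an edge of $\tilde{T}$, i.e. the corresponding $X$-split lies in $\Sigma(\tilde{T})$. Indeed, a non-constant character requires at least one change on any tree, and a character induced by an edge $e$ is realized by assigning one state to each of the two components of $\tilde{T}$ with $e$ removed, costing exactly one change; conversely, a single change partitions the leaves precisely according to the split of the edge carrying that change. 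Since $A_1(T)$ consists exactly of the characters induced by the edges of $T$, we have $A_1(T)=\Sigma(T)$ under the identification of characters with splits, and hence $|A_1(T)|=|\Sigma(T)|=2n-3$.

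Next I would show that $T$ is a maximum parsimony tree. For an arbitrary tree $\tilde{T}$, each of the $2n-3$ characters in $A_1(T)$ is non-constant and therefore contributes at least $1$ to $l(A_1(T),\tilde{T})$, so that $l(A_1(T),\tilde{T})\geq 2n-3$. On $T$ itself every such character has score exactly $1$, giving $l(A_1(T),T)=2n-3$. Thus $T$ achieves the global minimum of the parsimony score over all trees.

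Finally, for uniqueness, suppose $\tilde{T}$ also attains $l(A_1(T),\tilde{T})=2n-3$. Since the $2n-3$ summands are each at least $1$ and sum to $2n-3$, every character $f\in A_1(T)$ must satisfy $l(f,\tilde{T})=1$. By the dictionary above this forces the split of $f$ to lie in $\Sigma(\tilde{T})$, so $\Sigma(T)=A_1(T)\subseteq\Sigma(\tilde{T})$. As both trees are binary we have $|\Sigma(\tilde{T})|=2n-3=|\Sigma(T)|$, whence $\Sigma(T)=\Sigma(\tilde{T})$, and Buneman's theorem (Theorem \ref{thm:buneman}) yields $T=\tilde{T}$.

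I expect no genuine obstacle here, as the argument is essentially a counting-plus-Buneman packaging; the only step requiring care is the \enquote{if and only if} characterization of score-$1$ characters, since it is precisely this equivalence that turns the inequality $l(A_1(T),\tilde{T})\geq 2n-3$ into equality only for edge-induced splits and thereby transfers the entire split system of $T$ onto $\tilde{T}$.
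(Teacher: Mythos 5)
Your proposal is correct and follows essentially the same route as the paper: your \enquote{dictionary} between score-$1$ characters and edge-induced splits is exactly the paper's Lemma~\ref{lem:onecut}, and your counting argument (non-constant characters force $l(A_1(T),\tilde{T})\geq 2n-3$, equality forces every score to be $1$, then cardinality of split sets plus Theorem~\ref{thm:buneman} gives $T=\tilde{T}$) mirrors the paper's proof of Corollary~\ref{cor:buneman} step for step. No gaps; the only difference is that you state the edge--character correspondence as a biconditional up front rather than invoking it as a separate lemma.
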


The proof of this corollary exploits the following lemma.

\begin{lemma} \label{lem:onecut} Let $T$ be a phylogenetic $X$-tree with $|X|=n$ and $f$ a binary character on $X$ such that $l(f,T)=1$. Then, there is an edge $e$ of $T$ such that the $X$-split $X=X_1 | X_2$ induced by $e$ is such that for all $x \in X_1$ and $y \in X_2$ we have $f(x)=a$ and $f(y)=b$ (or vice versa). Moreover, $|A_1(T)|=|\Sigma(T)|=2n-3$.
\end{lemma}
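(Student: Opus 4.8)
The plan is to analyze a most parsimonious extension directly and then convert the first claim into a counting bijection.

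For the first statement, I would fix a most parsimonious extension $g\colon V\to\{a,b\}$ of $f$ on $T$, so that $ch(g,T)=l(f,T)=1$. By the definition of the changing number, this means there is exactly one edge $e=(u,v)\in E$ with $g(u)\neq g(v)$, while $g$ assigns equal states to the two endpoints of every other edge. Deleting $e$ from $T$ leaves two subtrees $T_1$ and $T_2$, and the key step is to argue that $g$ is constant on each of them: in a connected graph in which every edge joins two vertices carrying the same state, a short connectivity (or induction) argument forces a single state throughout. Hence $g\equiv a$ on one subtree and $g\equiv b$ on the other, possibly after swapping the roles of $a$ and $b$. Since $g$ restricts to $f$ on the leaves, every leaf $x$ in the first subtree satisfies $f(x)=a$ and every leaf $y$ in the second satisfies $f(y)=b$; writing $X_1,X_2$ for the two leaf sets, the split induced by $e$ is exactly $X_1|X_2$, which is the desired conclusion.

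For the counting statement, I would set up a map $\Phi$ from the edge set of $T$ to $A_1(T)$ by sending each edge $e$ to the binary character corresponding to its induced split $\sigma_e$, normalised so that taxon $1$ receives state $a$ (as assumed throughout the manuscript). This character has parsimony score exactly $1$: it is non-constant, so at least one change is needed, and assigning each side of $\sigma_e$ its constant state realises it with a single change across $e$, so it lies in $A_1(T)$. Conversely, the first part of the lemma shows that \emph{every} element of $A_1(T)$ arises in this way, so $\Phi$ is surjective; and since in a binary phylogenetic $X$-tree distinct edges induce distinct splits and we identify each character with its split, $\Phi$ is also injective. Thus $\Phi$ is a bijection and $|A_1(T)|=|\Sigma(T)|=2n-3$ by \citep[Prop. 2.1.3]{Semple2003}.

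The parsimony-score bookkeeping and the identification of characters with splits are routine. The one place that requires genuine care is the step showing $g$ is constant on each component after deleting the single changing edge, since this is exactly where the hypothesis $l(f,T)=1$ is used, and it is the crux that links the analytic statement (parsimony score one) to the combinatorial one (a single induced edge split).
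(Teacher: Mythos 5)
Your proposal is correct and follows essentially the same route as the paper's proof: take a most parsimonious extension with exactly one change edge, delete that edge, observe the extension is constant on each resulting subtree, and then identify score-1 characters bijectively with edge-induced splits. The only difference is presentational—you spell out the connectivity argument and the injectivity/surjectivity of the edge-to-character map, which the paper leaves implicit.
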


\begin{proof} As $l(f,T)=1$, there is a most parsimonious extension $g$ of $f$ on $T$ such that there is precisely one edge $e=\{u,v\}$ in $T$ for which $g(u)\neq g(v)$. Removing this edge partitions $T$ into two subtrees, one of which contains only nodes assigned $a$ by $g$ and the other one $b$. As $g(x)=f(x)$ for all leaves $x \in X$, this completes the first part of the proof. As this applies to all $f \in A_1(T)$, this immediately leads to $|A_1(T)|=|\Sigma(T)|$. This completes the proof.
\end{proof}

Now we are in the position to prove Corollary \ref{cor:buneman}. 

\begin{proof}[Proof of Corollary \ref{cor:buneman}] Let $T$ be a binary phylogenetic $X$-tree and let $E$ denote its edge set. Consider $A_1(T)$ and assume there is a binary phylogenetic $X$-tree $\tilde{T}$ such that $l(A_1(T),\tilde{T}) \leq l(A_1(T),T)=\sum\limits_{e \in E}1 = |E|$. For all $f \in A_1(T)$, we know that $f$ is not constant, because otherwise it would not have parsimony score 1 on $T$. So $f$ employs both character states $a$ and $b$. Therefore, it requires at least one change on {\em any} phylogenetic $X$-tree -- in particular also on $\tilde{T}$. Thus, $ l(A_1(T),\tilde{T})\geq\sum\limits_{f \in A_1(T)}1 =l(A_1(T),\tilde{T})= |E|$. So in summary, we must have  $l(A_1(T),\tilde{T}) = l(A_1(T),T)$. However, this implies in particular for all $f \in A_1(T)$ that $l(f,\tilde{T})=1$. So by Lemma \ref{lem:onecut}, all $f$ in $A_1(T)$ correspond to an edge of $\tilde{T}$. However, as $|A_1(T)|=|\Sigma(T)|=|\Sigma(\tilde{T})|$ (by Lemma \ref{lem:onecut} and using the fact that both $T$ and $\tilde{T}$ are binary), if all $f$ in $A_1(T)$, which by Lemma  \ref{lem:onecut} correspond to edges in $T$, also correspond to edges in $\tilde{T}$, then we have $\Sigma(T)=\Sigma(\tilde{T})$. This, by Theorem \ref{thm:buneman}, implies that $T=\tilde{T}$ and thus completes the proof.
\end{proof}

The last prerequisite that we need for this manuscript is the work presented in \citep{pablo}. In this manuscript, the authors analyzed all trees with up to $n=20$ leaves for $k=2$ and binary data, which corresponds to $A_k$ (and up to $n=12$ for non-binary data, which we are not considering here). They found that if $T$ is a binary phylogenetic $X$-tree with $|X|=n$ leaves and if $k< \frac{n}{4}$, then $T$ is the unique maximum parsimony tree for $A_k(T)$. This observation motivates the following conjecture.

\begin{conjecture} \label{conj} Let $T$ be a binary phylogenetic $X$-tree with $|X|=n$. Let $k< \frac{n}{4}$. Then, $T$ is the unique maximum parsimony tree for $A_k(T)$. 
\end{conjecture}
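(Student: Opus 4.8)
The plan is to mimic the proof of Corollary \ref{cor:buneman}, but to confront the essential new difficulty that a character of parsimony score $2$ on $T$ may drop to score $1$ on a competing tree $\tilde T$ (precisely when it happens to be a split of $\tilde T$), so the crude bound ``every character costs at least $2$'' that drove the $k=1$ argument is no longer available. I will restrict attention to $k=2$ and to $n>8$ (i.e.\ $k<n/4$), which, as I explain below, is exactly the regime in which the statement can hold. First I fix the reference value: by Theorem \ref{thm:lengthAk} we have $|A_2(T)|=2(n-3)^2$, and since every $f\in A_2(T)$ satisfies $l(f,T)=2$ by definition, $l(A_2(T),T)=2\,|A_2(T)|=4(n-3)^2$.

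Next, for an arbitrary competitor $\tilde T$ I set up an accounting of the deviation from this value. Writing $N_j:=|\{f\in A_2(T):l(f,\tilde T)=j\}|$, every $f\in A_2(T)$ is non-constant and hence satisfies $l(f,\tilde T)\geq 1$, so that
\[
l(A_2(T),\tilde T)-l(A_2(T),T)=\sum_{j\geq 1}(j-2)N_j=\Bigl(\sum_{j\geq 3}(j-2)N_j\Bigr)-N_1.
\]
For $\tilde T=T$ one has $N_1=0$ and $N_j=0$ for $j\geq 3$, giving equality, as it must. Hence it suffices to prove that $\sum_{j\geq 3}(j-2)N_j>N_1$ for every $\tilde T\neq T$; equality in the accounting would then force $N_1=0$ and $N_j=0$ for $j\ge 3$, i.e.\ every $f\in A_2(T)$ keeps score $2$ on $\tilde T$ and no such $f$ is a split of $\tilde T$, from which I would extract $\Sigma(T)=\Sigma(\tilde T)$ and invoke Buneman's theorem (Theorem \ref{thm:buneman}) to conclude $T=\tilde T$.

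The ``gain'' term $N_1$ is easy to bound from above: a character contributing to $N_1$ lies in $A_2(T)\cap\Sigma(\tilde T)$, so it is a split of $\tilde T$ of score $2$ (not $1$) on $T$; in particular it is not a split of $T$, and it cannot be trivial, since every trivial split belongs to both $\Sigma(T)$ and $\Sigma(\tilde T)$ and has score $1$ on $T$. Thus $N_1$ is at most the number of non-trivial (internal) splits of $\tilde T$, giving $N_1\leq n-3$. The decisive and hardest step is to bound the ``loss'' term $\sum_{j\geq 3}(j-2)N_j$ strictly above $n-3$. Here I would first establish a structural description of $A_2(T)$ (sharpening Lemma \ref{lem:onecut}): a binary character has score $2$ on $T$ exactly when it is realised by cutting two non-adjacent edges, so that the two outer pendant components share one state and the central component the other; equivalently, one of its state classes is a disjoint union of exactly two pendant subtrees while neither class is a single pendant subtree. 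Since $\tilde T\neq T$, by Theorem \ref{thm:buneman} there is an internal split $\sigma\in\Sigma(T)\setminus\Sigma(\tilde T)$; from the two pending subtrees defined by $\sigma$ I would manufacture a large family of score-$2$ characters of $T$ whose two pendant pieces straddle $\sigma$ in a way that makes them non-convex on $\tilde T$, hence of score at least $3$ there, and count this family to produce total excess at least $n-2>n-3\geq N_1$.

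I expect this last quantitative step to be the main obstacle: it requires controlling, uniformly over all competitors $\tilde T$ and all tree shapes, how the score-$2$ characters of $T$ redistribute across the parsimony scores on $\tilde T$. This is also exactly where the hypothesis $n>8$ must enter, and it cannot be dispensed with: for small $n$ the bound is simply false. For instance when $n=5$ no character can have score exceeding $\lfloor n/2\rfloor=2$, so $\sum_{j\geq3}(j-2)N_j=0$ while any $\tilde T\neq T$ already has an internal split of score $2$ on $T$, whence $N_1\geq1$ and $l(A_2(T),\tilde T)<l(A_2(T),T)$; thus $T$ is \emph{not} the maximum parsimony tree, which is why Conjecture \ref{conj} demands $k<n/4$. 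The same accounting scheme is available for general $k$, but both the structural characterization of $A_k(T)$ and the matching lower bound become far more delicate, and since (as the manuscript notes) the Buneman-type extension already fails beyond two substitutions, a proof of Conjecture \ref{conj} for $k\geq 3$ will presumably require genuinely new ideas rather than this direct accounting.
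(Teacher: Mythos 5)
Your accounting framework is sound as far as it goes: writing $N_j$ for the number of characters of $A_2(T)$ with score $j$ on a competitor $\tilde T$, the identity $l(A_2(T),\tilde T)-l(A_2(T),T)=\sum_{j\ge 3}(j-2)N_j-N_1$ is correct, and so is the bound $N_1\le n-3$ (a score-1 character on $\tilde T$ is a split of $\tilde T$, and it cannot be trivial since it has score 2 on $T$). But the argument stops exactly where the real difficulty begins: the claim that from a single split $\sigma\in\Sigma(T)\setminus\Sigma(\tilde T)$ one can always ``manufacture'' score-2 characters of $T$ with total excess at least $n-2$ on $\tilde T$ is only announced, never proved -- and you say yourself it is the main obstacle. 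This is a genuine gap, not a technicality. Nothing in the sketch explains how that family is counted, why its members have score at least $3$ on $\tilde T$ (non-convexity on $\tilde T$ only yields score at least $2$; a second obstruction is needed), or where the hypothesis $n\ge 9$ enters quantitatively. The last point is essential: the paper's own $n=8$ example (Figures \ref{8taxacounterex} and \ref{betterworse}) has $N_1=1$ and total excess $0$, so any correct excess bound must degrade as $n$ shrinks, and a proof must locate that dependence; your sketch does not. Moreover, the uniform target $n-2$ is far stronger than what is needed for close competitors (an NNI neighbour of $T$ has $N_1=1$), and it is not clear it even holds for them, so the very statement of your missing lemma may need revision. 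A smaller slip: the remark that equality in the accounting ``would force $N_1=0$ and $N_j=0$ for $j\ge 3$'' is false (e.g.\ $N_1=N_3=1$ gives equality); this is harmless only because your actual goal is the strict inequality.

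For comparison, the paper does not attempt such a direct global count at all -- indeed it remarks that Menger-type arguments were tried and did not yield a shorter proof. It proves the $k=2$ case by induction on $n$: the base case $n=9$ rests on exhaustive search, and the inductive step deletes one leaf of a cherry $[1,n+1]$, decomposes $A_2(T^{n+1})$ into a part $A$ (characters agreeing on leaves $1$ and $n+1$), which is essentially $A_2(T^n)$, and a part $B$ (characters disagreeing), which is essentially two copies of $A_1(\hat T)$, and then shows that any $\tilde T$ lacking the cherry loses at least $4$ on $A$ while gaining at most $2$ on $B$, whereas any $\tilde T\neq T^{n+1}$ containing the cherry already loses strictly on $A$ and cannot gain on $B$. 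Your route, if the excess lemma could actually be established, would buy something the paper's proof does not have -- independence from a computer-verified base case -- but as submitted it is a proof strategy with its central lemma missing, not a proof. (Note also that both you and the paper treat only $k=2$; the conjecture as stated, for all $k<\frac{n}{4}$, remains open either way.)
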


Note that the exhaustive search performed in \citep{pablo} shows in particular that the conjecture holds for $n=9$ and $k=2$, which we will use as the base case for our inductive proof.

In the present manuscript, we will mainly focus on the case $k=2$.\footnote{In \citep[p. 96]{pablo}, the authors state: ``For $t\geq 9$ and $s=2$, it can be verified that $T=P$ for every treeshape''. In their notation, $t$ is the number of leaves (i.e. $n$ in our case), $s$ is the number of changes (i.e. $k$ in our case), $T$ is the tree defining alignment $A_2(T)$ and $P$ is the -- in this case unique -- maximum parsimony tree of $A_2(T)$. However, note that this statement of the authors is solely based on their exhaustive search, and no formal proof is given in their manuscript. It is the main aim of the present manuscript to formally prove that this observation of \citep{pablo} is correct.} Regarding Conjecture \ref{conj}, this implies that we have to consider trees with more than eight leaves, as we require $2 < \frac{n}{4}$. Note that the motivation behind the requirement $k< \frac{n}{4}$ in Conjecture \ref{conj} is due to the fact that for larger values of $k$, it is already known that the conjecture fails. For instance, for $k=2$ and $n=8$, consider trees $T_1$ and $T_2$ as depicted in Figure \ref{8taxacounterex}. Note that the two trees differ only by swapped positions of the cherries $[3,4]$ and $[7,8]$. As stated in \citep{pablo} and as can be easily verified, for these two trees, Conjecture \ref{conj} does not hold. More specifically, as $A_2(T_1)$ consists of 50 characters, all of which have a parsimony score of 2 on $T_1$, we have $l(A_2(T_1),T_1)=100$. But, surprisingly, it can also be shown that we have $l(A_2(T_1),T_2)=99$. In Figure \ref{betterworse} we give an overview of the characters of $A_2(T_1)$. In fact, 49 of the 50 characters of $A_2(T_1)$ also have parsimony score 2 on $T_2$ and are thus also contained in $A_2(T_2)$, so they perform equally on both trees. Only one character, namely $f=aabbbbaa$, i.e. the one that refers to the center edge of $T_2$, has a parsimony score of only 1 on $T_2$  (this character is highlighted in Figure \ref{betterworse}). So in total, $T_2$ is more parsimonious for $A_2(T_1)$ than $T_1$ (and in fact, it can even be shown that $T_2$ is a maximum parsimony tree, but not the unique one in this case, as swapping the roles of the cherries $[3,4]$ and $[5,6]$ of $T_1$ would lead to the same result). 

\begin{figure} 
\center
\scalebox{.4}{\includegraphics{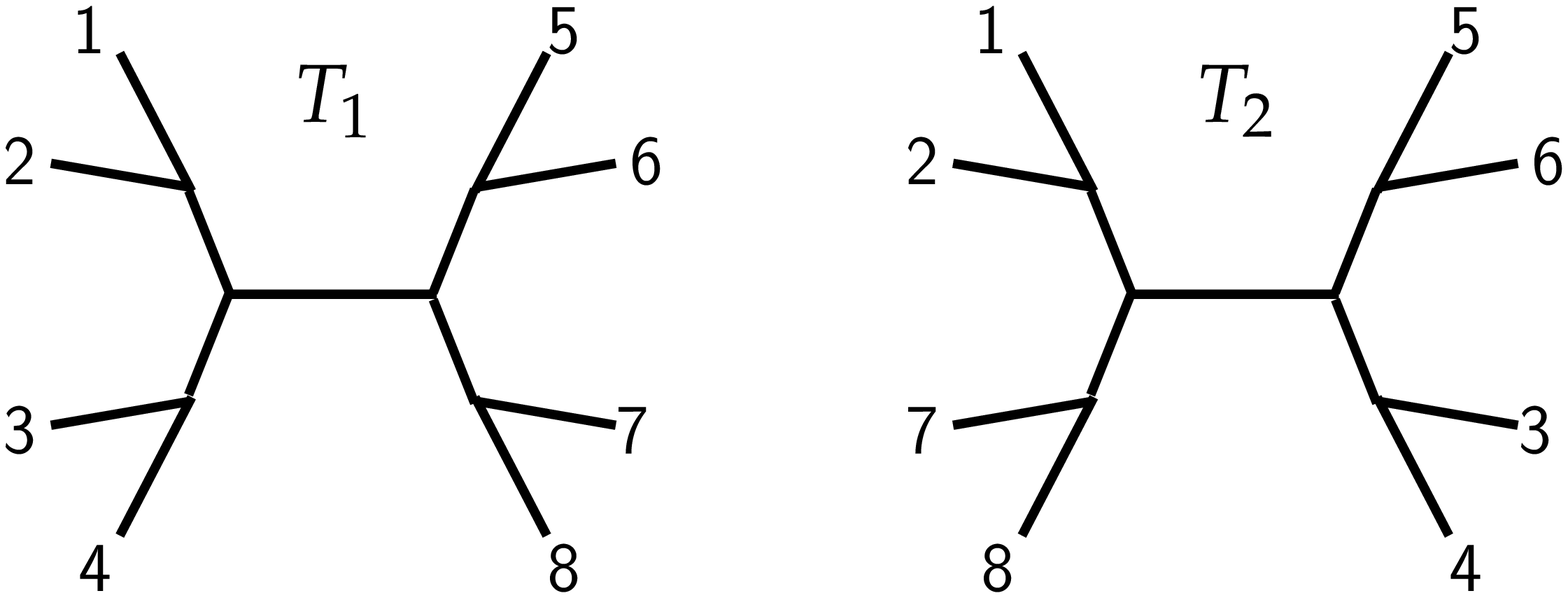} }
\caption{  \scriptsize The binary phylogenetic trees $T_1$ and $T_2$ as depicted here have the property that $l(A_2(T_1),T_2)=99$, whereas $l(A_2(T_1),T_1)=100$ (and, likewise, $l(A_2(T_2),T_2)=100$, whereas $l(A_2(T_2),T_1)=99$). This implies that $T_1$ is not a maximum parsimony tree for $A_2(T_1)$ (and neither is $T_2$ for $A_2(T_2)$). Note that the two trees differ only by swapped positions of the cherries $[3,4]$ and $[7,8]$. Alignment $A_2(T_1)$ is depicted in Figure \ref{betterworse}.}
\label{8taxacounterex}
\end{figure}

\begin{figure} 
\center
\scalebox{.65}{\includegraphics{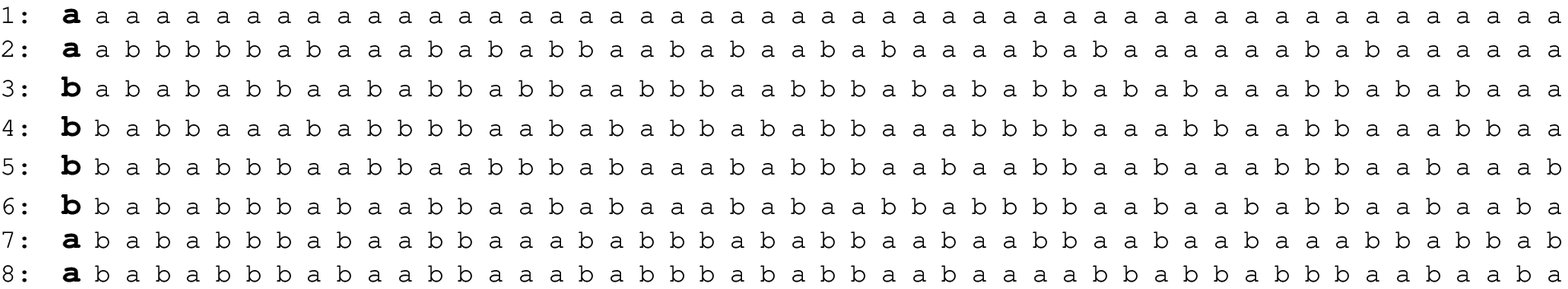}}
\caption{  \scriptsize Alignment $A_2(T_1)$ for $T_1$ as depicted in Figure \ref{8taxacounterex}. The first character is highlighted as it is the only character that has parsimony score 1 on $T_2$, whereas all others have parsimony score 2 on $T_2$ and are therefore also contained in $A_2(T_2)$. Overall, $l(A_2(T_1),T_1)=100>99=l(A_2(T_1),T_2)$. }
\label{betterworse}
\end{figure}

\section{Results}

We are now in the position to state the main result of the present manuscript.

\begin{theorem} \label{casek2} Let $T$ be a binary phylogenetic $X$-tree with $X=\{1,\ldots,n\}$, where $n\geq 9$. Then $T$ is the unique maximum parsimony tree of alignment $A_2(T)$. \end{theorem}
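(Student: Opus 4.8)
The plan is to argue by induction on $n$, taking the case $n=9$ (established by the exhaustive search underlying Conjecture \ref{conj}) as the base case and reducing the step from $n$ to $n-1$ by deleting a leaf of a cherry. So fix $n\ge 10$, assume the statement for $n-1$, and let $\tilde T$ be any maximum parsimony tree for $A_2(T)$; the goal is to force $\tilde T=T$. Unlike the $k=1$ situation of Corollary \ref{cor:buneman}, the naive lower bound is too weak here, since a character of score $2$ on $T$ can drop to score $1$ on another tree (this is exactly what happens in the eight-leaf example of Figure \ref{8taxacounterex}). To localize the analysis I would choose a cherry $[p,q]$ of $T$ not containing leaf $1$ (one exists, as a binary tree has at least two cherries), set $T':=T-q$ (delete $q$ and suppress the resulting degree-$2$ vertex), and split $A_2(T)$ according to the cherry into $A_2^{=}(T)$ (characters with $f(p)=f(q)$) and $A_2^{\neq}(T)$ (those with $f(p)\neq f(q)$). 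By Theorem \ref{thm:lengthAk}, $|A_2(T)|=2(n-3)^2$ and $|A_2(T')|=2(n-4)^2$; I would first show $|A_2^{=}(T)|=|A_2(T')|=2(n-4)^2$, so that only $|A_2^{\neq}(T)|=4n-14$ characters split the cherry.

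The clean part of the argument concerns $A_2^{=}(T)$. Deleting a leaf of a \emph{monochromatic} cherry never changes the parsimony score, and this is a purely local fact: the Fitch state contributed at the cherry's parent is the singleton $\{f(p)\}$, exactly as at $p$ in $T'$, independently of the rest of the tree. Hence $f\mapsto f|_{X\setminus\{q\}}$ is a score-preserving bijection from $A_2^{=}(T)$ onto $A_2(T')$, \emph{and the same bijection is score-preserving on any tree that also contains the cherry $[p,q]$}. Consequently, provided $\tilde T$ contains $[p,q]$, one gets $l(A_2^{=}(T),\tilde T)=l(A_2(T'),\tilde T')$ and $l(A_2^{=}(T),T)=l(A_2(T'),T')$ with $\tilde T':=\tilde T-q$; by the inductive hypothesis $T'$ is the unique maximum parsimony tree of $A_2(T')$, so this contribution is minimized precisely when $\tilde T'=T'$, with a strict gap otherwise.

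It then remains to control the two places where an alternative tree could claw back a saving. First, I must show that a maximum parsimony $\tilde T$ actually contains the chosen cherry $[p,q]$: the plan is to argue that separating $p$ from $q$ converts a large family of the $A_2^{=}(T)$ characters (which group $p$ and $q$) into characters of higher score on $\tilde T$, a penalty exceeding any compensating gain. Second, I must handle the $A_2^{\neq}(T)$ characters: each is non-constant, hence of score at least $1$ on $\tilde T$, and I would show their total score on $\tilde T$ is no smaller than on $T$, so that these $O(n)$ characters cannot overturn the strict inductive gain from $A_2^{=}(T)$ when $\tilde T'\neq T'$. Combining the two, optimality of $\tilde T$ forces $\tilde T'=T'$, and together with the shared cherry $[p,q]$ this gives $\tilde T=T$.

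The main obstacle is precisely this quantitative control. Writing $n_1:=|\{f\in A_2(T):l(f,\tilde T)=1\}|$ and $E:=\sum_{f\in A_2(T),\,l(f,\tilde T)\ge 3}(l(f,\tilde T)-2)$, one has $l(A_2(T),\tilde T)=4(n-3)^2-n_1+E$, so uniqueness is equivalent to the strict inequality $E>n_1$ for every $\tilde T\neq T$; here $n_1\le n-3$ (at most the nontrivial splits of $\tilde T$ that are not splits of $T$), and the crux is to exhibit, from the structural difference between $T$ and $\tilde T$ and via the subtree decomposition of Figure \ref{treedecomp}, enough score-$2$-on-$T$ characters costing at least $3$ on $\tilde T$ to guarantee $E>n_1$. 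This margin is exactly what fails at $n=8$, where a single character is saved with no compensating cost, and making it strictly positive is where the hypothesis $n\ge 9$ must be used; I expect this estimate, rather than the reduction machinery, to be the hard part of the proof.
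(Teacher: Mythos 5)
Your plan reproduces the skeleton of the paper's own proof almost exactly: induction on $n$ with the $n=9$ exhaustive-search base case, deletion of one leaf of a cherry, the split of $A_2(T)$ into a monochromatic part (your $A_2^{=}(T)$, the paper's Part $A$) and a dichromatic part (your $A_2^{\neq}(T)$, the paper's Part $B$), and the score-preserving bijection between $A_2^{=}(T)$ and $A_2(T')$ that lets the inductive hypothesis act; your counts $2(n-4)^2$ and $4n-14$ are also correct. The difficulty is that everything you mark as ``to be shown'' is precisely the substance of the paper's argument, and in one place your stated intermediate claim is not merely unproven but false as formulated. You assert that the total score of $A_2^{\neq}(T)$ on $\tilde T$ ``is no smaller than on $T$.'' For trees $\tilde T$ that separate $p$ from $q$ this fails: the paper shows that $T$ is \emph{not} a maximum parsimony tree for the dichromatic part, whose score on such a $\tilde T$ can drop strictly below its score on $T$ --- by up to $2$. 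The inequality you want holds only for cherry-containing $\tilde T$, but your Step 1 (forcing the cherry into any MP tree) cannot be completed without first bounding exactly this possible drop, since the ``compensating gain'' you must beat comes from the dichromatic part. The paper closes this circle with two quantitative estimates you do not supply: (i) a pairing argument on the dichromatic part --- each $f$ that falls to score $1$ on $\tilde T$ corresponds to an edge on the path from $p$ to $q$ in $\tilde T$, and its partner $\bar f$ (same split of the remaining taxa, roles of $a$ and $b$ exchanged) rises to score $3$ unless that edge is adjacent to the neighbor of $p$ or of $q$, so at most two gains are uncompensated; and (ii) four explicit witness characters (state $a$ exactly on $p$, $q$ and one taxon $w,x,y,z$ chosen after discarding at most six taxa within distance $3$ of $p$ or $q$ in $\tilde T$ or in $T$ --- this bookkeeping is where $n\geq 10$ in the inductive step is consumed) which have score $2$ on $T$ but score $3$ on any cherry-separating $\tilde T$. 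Only the resulting budget $+4-2\geq +2$ settles the non-cherry case, after which your Step 2 and the induction finish the cherry case.

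Your fallback framing is also weaker than you need. The identity $l(A_2(T),\tilde T)=4(n-3)^2-n_1+E$ and the bound $n_1\leq n-3$ are correct, but proving $E>n_1$ via $E>n-3$ would require exhibiting, for \emph{every} $\tilde T\neq T$, on the order of $n$ characters of excess score; nothing in your outline produces these, and the paper never attempts such a bound. Instead it cancels almost all of $n_1$ against $E$ through the pairing (i), leaving at most two uncancelled units, which the four witnesses of (ii) then beat. So the reduction machinery you describe is sound and matches the paper, but the two deferred estimates are not a technical remainder --- they are the proof, and the eight-leaf example of Figure \ref{8taxacounterex} shows there is no soft argument that avoids them.
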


Note that this theorem shows that Conjecture \ref{conj} is true for $k=2$, as $n\geq9$ implies that $k<\frac{n}{4}$. It is the main aim of this manuscript to prove this theorem subsequently. However, before we can proceed with a proof of Theorem \ref{casek2}, we need to verify that the following necessary condition holds, which is a direct extension of the Buneman theorem or, more precisely, of Corollary \ref{cor:buneman}. 

\begin{proposition}\label{A2definesT} Let $T$ be a binary phylogenetic $X$-tree. Then, $A_2(T)$ defines $T$ in the sense that if $\tilde{T}$ is another binary phylogenetic $X$-tree, i.e. $T\neq \tilde{T}$, then we have $A_2(T) \neq A_2(\tilde{T})$. 
\end{proposition}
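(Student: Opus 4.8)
The plan is to prove the contrapositive in the spirit of Corollary~\ref{cor:buneman}: assuming $T\neq\tilde{T}$, I will exhibit a single character that lies in exactly one of $A_2(T)$ and $A_2(\tilde{T})$. By the Buneman theorem (Theorem~\ref{thm:buneman}), $T\neq\tilde{T}$ implies $\Sigma(T)\neq\Sigma(\tilde{T})$, so the symmetric difference $\Sigma(T)\,\triangle\,\Sigma(\tilde{T})$ is non-empty. The key observation is that a split $\sigma$ which is an edge of $\tilde{T}$ but not of $T$ yields, via its associated binary character $f_\sigma$, a natural candidate: since $\sigma\in\Sigma(\tilde{T})$ we have $l(f_\sigma,\tilde{T})=1$ and thus $f_\sigma\notin A_2(\tilde{T})$, so it suffices to find such a $\sigma$ for which additionally $l(f_\sigma,T)=2$, because then $f_\sigma\in A_2(T)\setminus A_2(\tilde{T})$.

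To guarantee $l(f_\sigma,T)=2$, I would not choose $\sigma$ arbitrarily but take it to be of \emph{minimal} size among all splits in $\Sigma(T)\,\triangle\,\Sigma(\tilde{T})$. After possibly interchanging the roles of $T$ and $\tilde{T}$, I may assume $\sigma=A\mid B\in\Sigma(\tilde{T})\setminus\Sigma(T)$ with $A$ the smaller side, so $|A|=|\sigma|$. As every trivial split is an edge of every binary $X$-tree, $\sigma$ is non-trivial and $|A|\geq 2$; hence in $\tilde{T}$ the pendant subtree corresponding to $A$ decomposes into its two maximal pending subtrees, with leaf sets $A_1$ and $A_2$ satisfying $A_1\sqcup A_2=A$ and $1\leq|A_i|<|A|$. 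The splits $A_i\mid(X\setminus A_i)$ therefore have size $|A_i|<|\sigma|$ and belong to $\Sigma(\tilde{T})$; by the minimality of $|\sigma|$ they cannot lie in the symmetric difference, so they must also be edges of $T$. In other words, both $A_1$ and $A_2$ are leaf sets of pendant subtrees of $T$, attached to the rest of $T$ by single edges $e_1$ and $e_2$, respectively.

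I can then construct an extension of $f_\sigma$ on $T$ directly: assign state $a$ to every vertex of the two pendant subtrees with leaf sets $A_1$ and $A_2$, and state $b$ to all remaining vertices of $T$. Since $A=A_1\sqcup A_2$ this is a valid extension, and its only changing edges are $e_1$ and $e_2$; the crucial point is that colouring the path of $T$ joining the two subtrees with the majority state $b$ absorbs all further branches hanging off that path at no extra cost. Hence $l(f_\sigma,T)\leq 2$, while $l(f_\sigma,T)\geq 2$ because $f_\sigma$ is non-constant and $\sigma\notin\Sigma(T)$ rules out score $1$ by Lemma~\ref{lem:onecut}. Thus $l(f_\sigma,T)=2$, i.e. $f_\sigma\in A_2(T)$, which together with $f_\sigma\notin A_2(\tilde{T})$ gives $A_2(T)\neq A_2(\tilde{T})$.

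I expect the main obstacle to be precisely the upper bound $l(f_\sigma,T)\leq 2$: the tempting choice of keeping the entire convex hull of $A$ in state $a$ generically forces three or more changes, and the argument only succeeds once one colours the connecting path with the state of $B$ so that merely the two sub-clades $A_1,A_2$ must be ``cut off''. The second point requiring care is the extremal selection, namely verifying that minimality of $|\sigma|$ genuinely propagates to the two immediate sub-clades $A_1,A_2$ (using that $\tilde{T}$ is binary to ensure both are non-empty), together with the degenerate case $|A|=2$, where $A_1,A_2$ are single leaves and $f_\sigma$ is simply the character isolating a non-cherry pair.
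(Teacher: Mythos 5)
Your proposal is correct and follows essentially the same argument as the paper's proof: pick a split of minimal size in the difference of the split sets, use minimality to show the two maximal pending sub-splits $A_1\mid X\setminus A_1$ and $A_2\mid X\setminus A_2$ are common to both trees, and build an explicit two-change extension to pin the parsimony score at exactly $2$. The only (immaterial) difference is orientation: you exhibit a character in $A_2(T)\setminus A_2(\tilde{T})$, whereas the paper exhibits one in $A_2(\tilde{T})\setminus A_2(T)$; by the symmetry of the statement these are interchangeable.
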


\begin{proof} If $T\neq \tilde{T}$, then $\Sigma(T)\neq \Sigma(\tilde{T})$ by Theorem \ref{thm:buneman}, but as both trees are binary, we have $|\Sigma(T)|=|\Sigma(\tilde{T})|=2n-3$ as explained in the previous section. Together, this implies that $\Sigma(T)\setminus \Sigma(\tilde{T}) \neq \emptyset$. Let $\sigma=A|B \in \Sigma(T)\setminus \Sigma(\tilde{T})$ be minimal, i.e. $\sigma=\argmin\limits_{\tilde{\sigma} \in \Sigma(T)\setminus \Sigma(\tilde{T})}|\tilde{\sigma}|$. Without loss of generality, we assume $|A|=|\sigma|$, i.e. $|A|\leq |B|$. Note that $|A|\geq 2$ as $\sigma \in \Sigma(T)$ but $\sigma\not\in \Sigma(\tilde{T})$, so $\sigma$ cannot refer to a trivial $X$-split (otherwise, it would be contained in both split sets as all $X$-trees contain edges leading to each of the leaves in $X$). Moreover, $\sigma$ divides $T$ into two subtrees $T_A$ with leaf set $A$ and $T_B$ with leaf set $B$. In the following, we denote by $T_{A_1}$ and $T_{A_2}$ the two maximal pending subtrees of $T_A$, which must exist as $|A|\geq 2$, cf. Figure \ref{treedecomp}. The taxon sets of $T_{A_1}$ and $T_{A_2}$ are denoted by $A_1$ and $A_2$, respectively. Note that $|A_1|<|A|$ and $|A_2|<|A|$, and also note that $\Sigma(T)$ must contain the two (possibly trivial) $X$-splits $\sigma_1=A_1|X\setminus A_1$ and $\sigma_2=A_2|X\setminus A_2$, as $T_{A_1}$ and $T_{A_2}$ are subtrees of $T$.

Now we define a character $f$ such that $f$ assigns all taxa in $A$ state $a$ and all taxa in $B$ state $b$ (or vice versa as we require $f(1)=a$ in this manuscript). Then, $l(f,T)=1$ as $f$ is induced by $\sigma \in \Sigma(T)$, but as $\sigma \not\in \Sigma(\tilde{T})$, we have $l(f,\tilde{T})\geq 2$. 

On the other hand, note that as $\sigma$ was chosen to be a minimal element of $\Sigma(T)\setminus \Sigma(\tilde{T})$, the $X$-splits $\sigma_1$ and $\sigma_2$ must be in $\Sigma(T)\cap\Sigma(\tilde{T})$. This is due to the fact that we already know that $\sigma_1$ and $\sigma_2$ are contained in $\Sigma(T)$, and furthermore if one of them, say $\sigma_1$, was not contained in $\Sigma(\tilde{T})$, we would have $\sigma_1\in\Sigma(T)\setminus\Sigma(\tilde{T})$ and $|\sigma_1|=|A_1|<|A|=|\sigma|$, which would contradict the minimality of $\sigma$. So we have $\sigma_1$, $\sigma_2$ $\in \Sigma(\tilde{T})$, which implies $l(f,\tilde{T})\leq 2$. In order to see this, we denote the subtree of $\tilde{T}$ containing only leaves of $A_i$ by $\tilde{T}_{A_i}$ for $i=1,2$, respectively. Now note that all inner nodes within $\tilde{T}_{A_1}$ and $\tilde{T}_{A_2}$ could be assigned state $a$, and all other inner nodes in $\tilde{T}$ could be assigned state $b$. This way, there would be precisely two changes from $a$ to $b$, namely on the two edges connecting $\tilde{T}_{A_1}$ and $\tilde{T}_{A_2}$ with the rest of tree $\tilde{T}$, respectively. 

Altogether we have $l(f,\tilde{T})\geq 2$ and $l(f,\tilde{T})\leq 2$, which implies $l(f,\tilde{T})= 2$. Thus, $f$ is contained in $A_2(\tilde{T})$, but as $l(f,T)=1$, $f$ is not contained in $A_2(T)$. Therefore, $A_2(T)\neq A_2(\tilde{T})$. This completes the proof.
\end{proof}

Proposition \ref{A2definesT} is important, as it indeed is a necessary condition for Theorem \ref{casek2}. If this proposition was not true, it would be possible for two different trees to have identical $A_2$ alignments, so in particular, both of them would have the same parsimony score. Therefore, none of them could be the unique maximum parsimony tree of this alignment. 

However, recall that the example presented in Figure \ref{8taxacounterex} shows that the fact that while $A_2(T) \neq A_2(\tilde{T})$ is necessary, it is not sufficient for Conjecture \ref{conj} to hold. 

We are finally in a position to prove Theorem \ref{casek2}. 

\begin{proof}[Proof of Theorem \ref{casek2}] We prove the statement by induction on $n$. The base case of the induction, the case $n=9$, is a direct consequence of the exhaustive search presented in \citep{pablo}. We repeated this exhaustive search in order to verify the results, so indeed, for all binary phylogenetic trees $T$ with $n=9$ leaves, we always have that $T$ is the unique maximum parsimony tree for $A_2(T)$.

So all that remains to be considered here is the inductive step. Thus, consider now a binary phylogenetic tree $T^{n+1}$ with $n+1$ leaves, where $n+1\geq 10$ (otherwise we would again be in the base case) and assume that for all binary phylogenetic trees $T$ with $n$ leaves we already know that $T$ is the unique maximum parsimony tree for $A_2(T)$. 

We now consider $A_2(T^{n+1})$. As $T^{n+1}$ has more than four leaves, we know that $T^{n+1}$ has at least two cherries \citep[Prop. 1.2.5]{Semple2003}. Our strategy is now to compare $A_2(T^{n+1})$ with the alignment $A_2(T^n)$, which shall correspond to the specific tree $T^n$ that we get when we replace one cherry by a leaf. For technical simplicity, we assume without loss of generality that one cherry is labelled $[1,n+1]$ (if this cherry does not exist in $T^{n+1}$, we re-label the leaves accordingly). So in the following, let $T^n$ be the binary phylogenetic tree that results from deleting leaf $n+1$ and the edge leading to this leaf and suppressing the resulting node of degree 2. By the inductive hypothesis, $T^n$ is the unique maximum parsimony tree for $A_2(T^n)$, and by Theorem \ref{thm:lengthAk}, using $k=2$ we know that $|A_2(T^n)|=2(n-3)^2$.

The central idea of the proof is now the following: We divide $A_2(T^{n+1})$ into two parts (subsets): Part $A$ contains only characters which assign the same state to leaves 1 and $n+1$ (as before we assume without loss of generality that leaf 1 is in state $a$, so in Part $A$, leaves 1 and $n+1$ are both in state $a$), whereas Part $B$ only contains characters which assign leaves $1$ and $n+1$ different states (i.e. without loss of generality, leaf 1 is assigned state $a$ and leaf $n+1$ is assigned state $b$). A schematic sketch of the decomposition of alignment $A_2(T^{n+1})$ is given by Figure \ref{alignmentA2}, and we will investigate this decomposition more in-depth shortly.

\begin{figure}
%
%

\scalebox{.9}{\includegraphics{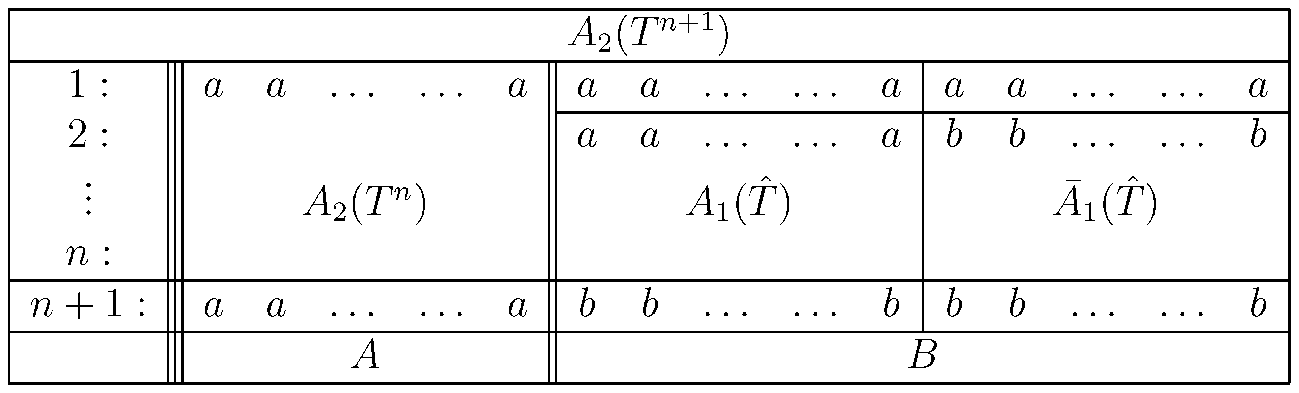}}
\caption{\scriptsize Illustration of alignment $A_2(T^{n+1})$: Part $A$ contains all characters $f$ for which $f(1)=f(n+1)=a$, and Part $B$ contains all characters $f$ for which $f(1)=a$ and $f(n+1)=b$. If we disregard $n+1$, Part $A$ corresponds to alignment $A_2(T^n)$. If we disregard both $1$ and $n+1$, alignment $B$ consists of two copies of $A_1(\hat{T})$, one of which such that taxon 2 is in state $a$ and the other one such that taxon 2 is in state $b$.  }\label{alignmentA2}
\end{figure}

We proceed as follows.

\begin{enumerate}
\item First will analyze Part $A$ of $A_2(T^{n+1})$ and show that $T^{n+1}$ is most parsimonious for this part of the alignment. In particular, we want to show that $T^{n+1}$ is the unique maximum parsimony tree for $A$, i.e. we want to show that for all binary phylogenetic trees $\tilde{T}$ on $n+1$ taxa we have $l(A,\tilde{T})\geq l(A,T^{n+1})+1$. 

We do this by showing that $A$ is closely related to $A_2(T^n)$. As a first step, we will show that $l(A,T^{n+1})=l(A_2(T^n),T^n)$. In order to see this, first note that $A$ is just like $A_2(T^n)$ but with an additional line for taxon $n+1$, which contains only $a$'s and is thus a copy of line $1$ (in particular, we have $|A|=|A_2(T^n)|=2(n-3)^2$ by Theorem \ref{thm:lengthAk}). So as in $A$ both elements of the cherry $[1,n+1]$ are in state $a$, no change on the edges of this cherry will ever happen in any most parsimonious extension of a character in $A$ (because if the node adjacent to $1$ and $n+1$ was in state $b$, there would be two changes in the cherry, but then at least one change could be saved by assigning this node state $a$ instead, as this might cause an extra change on the other edge incident to this node, but both cherry edges would then not require changes anymore). So indeed, adding $n+1$ to $T^n$ in order to get $T^{n+1}$ does not increase the parsimony score at all, so we conclude 
\begin{equation}\label{leafaddTn+1}l(A,T^{n+1})=l(A_2(T^n),T^n).\end{equation}

Now consider some other binary phylogenetic tree $\tilde{T}$ on the same leaf set as $T^{n+1}$. Again, we denote by $\tilde{T}^n$ the tree that results from deleting leaf $n+1$ from $\tilde{T}$ as well as the edge leading to $n+1$ (and suppressing the resulting node of degree 2). As can be easily seen, adding a leaf cannot decrease the parsimony score -- irregardless of the alignment and the tree. So this applies also to alignments $A_2(T^n)$ and $A$ as well as trees $\tilde{T}^n$ and $\tilde{T}$. We conclude: 

\begin{equation}\label{leafaddtildeT}l(A,\tilde{T})\geq l(A_2(T^n),\tilde{T}^n).\end{equation} 

Moreover, as by the inductive assumption $T^n$ is the unique maximum parsimony tree for $A_2(T^n)$, this implies that 

\begin{equation}\label{clear}l(A_2(T^n),\tilde{T}^n) \geq l(A_2(T^n),T^n),\end{equation} 

where equality holds if and only if $\tilde{T}^n=T^n$. 

So in total, if we summarize Equations \eqref{leafaddTn+1}, \eqref{leafaddtildeT} and \eqref{clear}, we obtain:

\begin{equation}\label{MPforA}  l(A,T^{n+1})=l(A_2(T^n),T^n) \leq l(A_2(T^n),\tilde{T}^n) \leq l(A,\tilde{T}). \end{equation}

Thus, by Equation \eqref{MPforA} we can immediately conclude that $T^{n+1}$ is a maximum parsimony tree for Part $A$ of alignment $A_2(T^{n+1})$. 

Moreover, as the inequality in Equation \eqref{clear} is strict unless $\tilde{T}^n=T^n$, for all trees $\tilde{T}$ which do \emph{not} contain $T^n$ as a subtree, we already know that their parsimony score for $A$ is strictly higher than that of $T^{n+1}$, so the first inequality in Equation \eqref{MPforA} would be strict.

If, on the other hand,  $\tilde{T}$ \emph{does } contain $T^n$ as a subtree, then it \emph{cannot} contain the cherry $[1,n+1]$, because otherwise we would have $\tilde{T}=T^{n+1}$, which would contradict the choice of $\tilde{T}$. We will now investigate this case more in-depth, as we need more details of this case later on.

\item Let us consider the case where $\tilde{T}$ does \emph{not} contain the cherry $[1,n+1]$. In particular, this means that $d_{\tilde{T}}(1,n+1)\geq 3$ (as $\tilde{T}$ is binary). Note that as $\tilde{T}$ is binary and has more than six leaves (actually, in the inductive step we may assume that $\tilde{T}$ has at least 10 taxa, as otherwise we could consider again the base case of the induction), in case $d_{\tilde{T}}(1,n+1)= 3$, there can be at most three taxa which have distance 3 to either leaf 1 or leaf $n+1$. This scenario is depicted in Figure \ref{fxontildeTNEW}. We put these up to three taxa on a list of `disregarded taxa'. 
On the other hand, in case that $d_{\tilde{T}}(1,n+1)>3$, if there is a taxon that has distance at most 3 to \emph{both} leaves 1 and $n+1$ (i.e. if it is `between' them), we put this on the list of disregarded taxa. This scenario is depicted in Figure \ref{fxontildeTNEW2}. Note that in both cases, the list of disregarded taxa prevents a `chain' of leaves $a$, $b$, $c$ in $\tilde{T}$ such that $d(a,b)$ and $d(b,c)\leq 3$ and $1$, $n+1 \in \{a,b,c\}$. Such prevented chains are depicted in Figure \ref{nochain}.

So altogether, due to $\tilde{T}$, there are now at most three taxa on the list of disregarded taxa. 

\begin{figure} 
\center
\scalebox{.4}{\includegraphics{}}
\caption{\scriptsize A tree $\tilde{T}$ in which leaves 1 and $n+1$ are at distance 3 to each other and can therefore be regarded as a chain of length 2 (depicted in bold). In such a tree, there can be up to 3 taxa at distance up to 3 to either one of these two leaves (note that not both 1 \emph{and} $n+1$ can be adjacent to a cherry because then the entire tree would only employ six taxa). We construct a character $f_x$ (and, analogously, $f_w$, $f_y$ and $f_z$) which assigns state $a$ to leaves 1, $n+1$ and $x$ (or $w$, $y$ or $z$, respectively) and $b$ to all other leaves. Note that $x$ (and $w$, $y$ and $z$) has a distance of more than 3 in $\tilde{T}$ to both leaves 1 and $n+1$. Therefore, the parsimony score of $f_x$ on $\tilde{T}$ is 3. The three change edges are represented by dashed lines. }
\label{fxontildeTNEW}
\end{figure}

\begin{figure} 
\center
\scalebox{.5}{\includegraphics{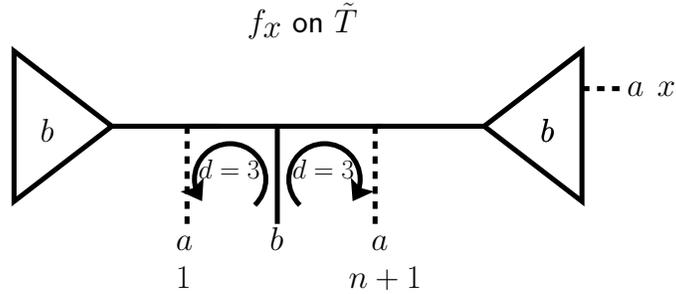}}
\caption{ \scriptsize A tree $\tilde{T}$ with $d_{\tilde{T}}(1,n+1)>3$ and with a taxon that is `in between' taxa 1 and $n+1$ in the sense that it has distance at most 3 to \emph{both} of them (note that it could also have distance 2 to one of them if it formed a cherry with this leaf). If this taxon in the middle was labelled $a$ (just as $1$ and $n+1$) by $f_w$, $f_x$, $f_y$ or $f_z$, then the parsimony score of the resulting character would be 2, not 3 (cf. Figure \ref{nochain}). Therefore, this taxon has to be labelled $b$. }
\label{fxontildeTNEW2}
\end{figure}

\begin{figure} 
\center
\scalebox{.4}{\includegraphics{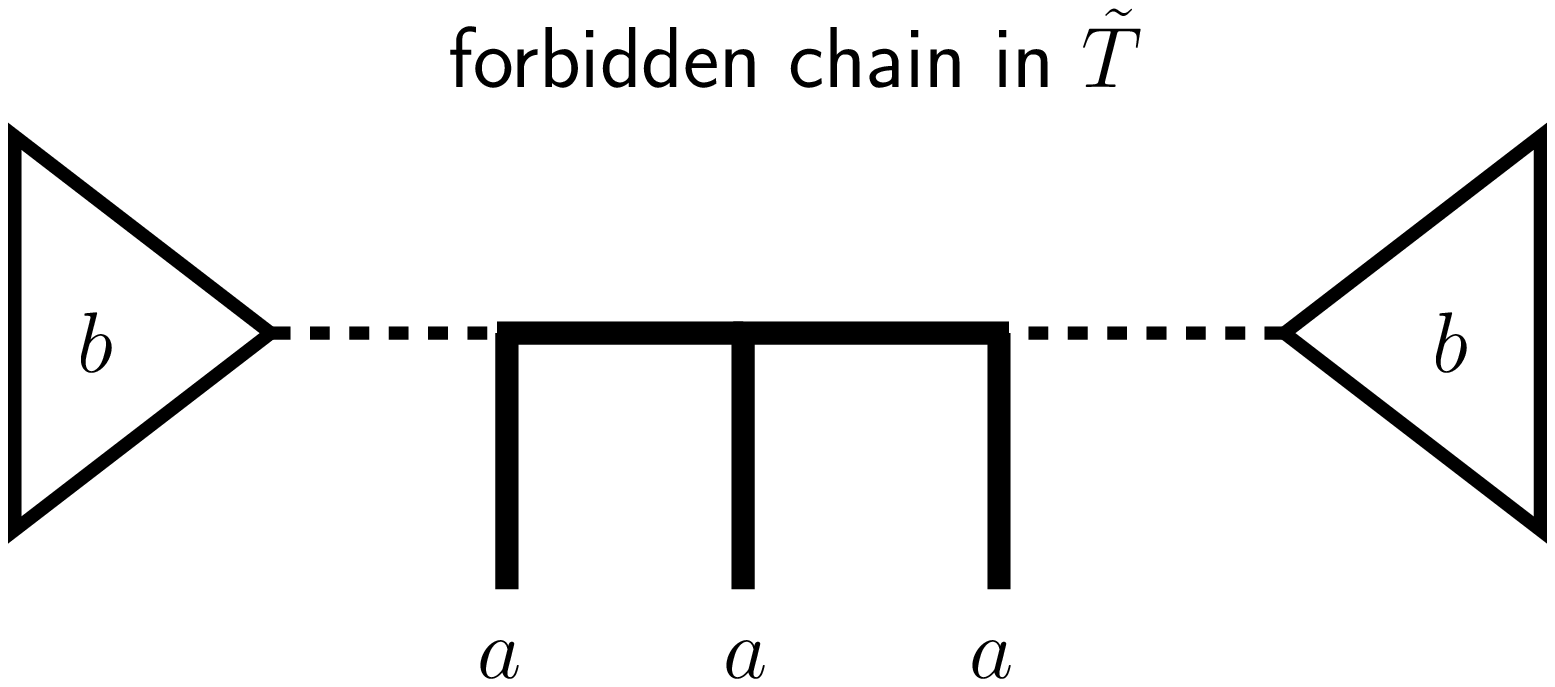}}\vspace{1cm}
\scalebox{.4}{\includegraphics{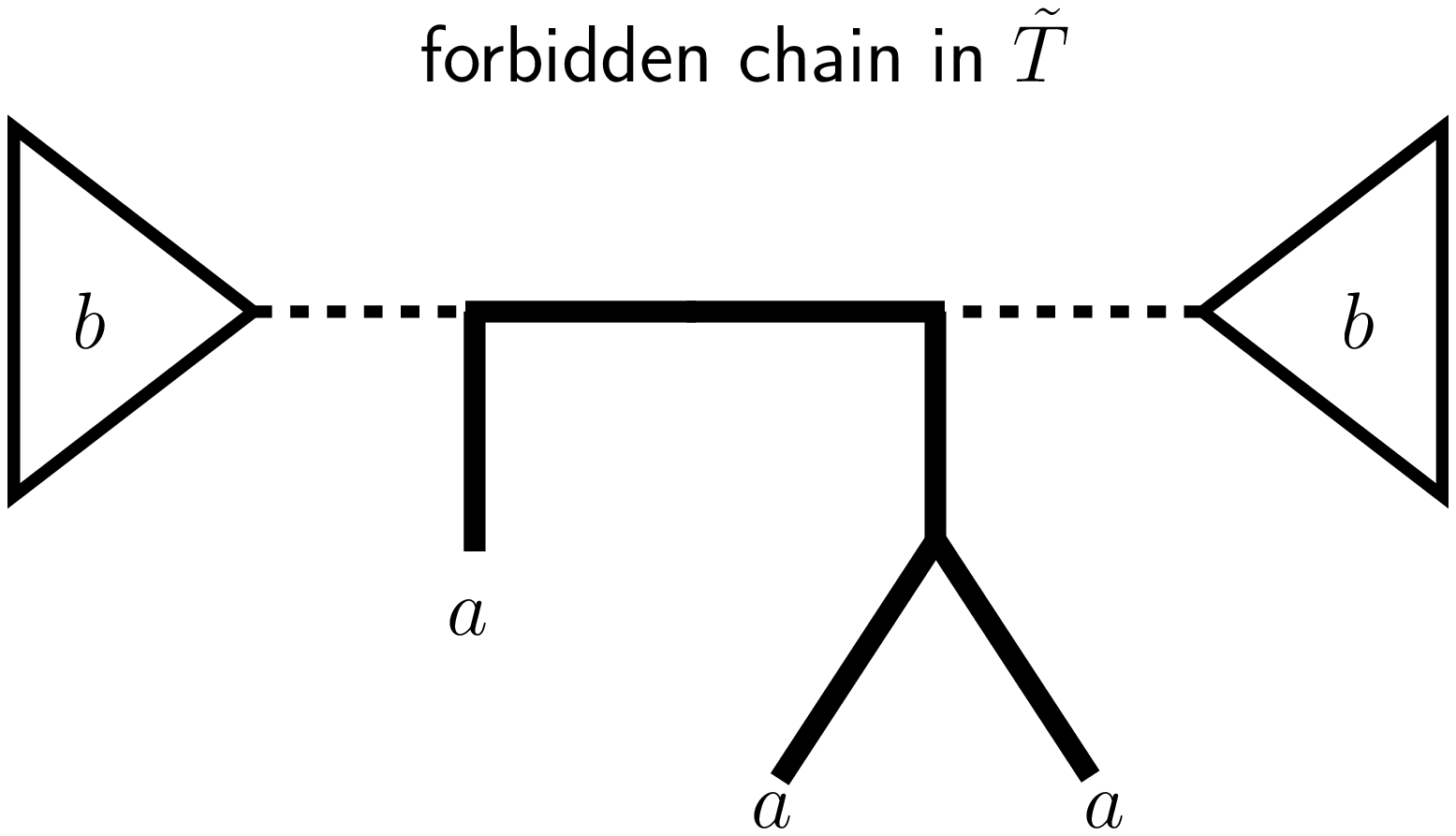}}
\caption{ \scriptsize Two types of trees $\tilde{T}$ with a chain of three taxa (cf. bold edges), i.e. with one taxon that has at most distance 3 to both of the other two taxa. If these three taxa were all labelled $a$ and all other taxa $b$ by $f_w$, $f_x$, $f_y$ or $f_z$, i.e. if two of them were leaves 1 and $n+1$ and the third one was either $w$, $x$, $y$ or $z$, then the parsimony score of the resulting character would be 2, not 3 (cf. dashed substitution edges). Therefore, in our construction of the four characters $f_w$, $f_x$, $f_y$ and $f_z$ we prevent such $a$-labelled chains. }
\label{nochain}
\end{figure}

Next consider $T^{n+1}$. As in $T^{n+1}$, 1 and $n+1$ form the cherry $[1,n+1]$ and as $T^{n+1}$ is also binary and as, again, there are at least ten leaves in total, there can be at most one taxon at distance 3 to taxon 1 (and thus also to $n+1$). This scenario is depicted in Figure \ref{fxonTn+1}. It may happen that there is such a taxon in $T^{n+1}$, and it may be that this taxon is different from the ones we already decided to disregard. We additionally disregard this taxon. Additionally, we now put taxa 1 and $n+1$ on the list of disregarded taxa.

So in total, we now disregard at most six taxa: 1, $n+1$, at most three neighbors of taxon 1 or $n+1$ in $\tilde{T}$ (i.e. taxa which have distance at most 3 to either 1 or $n+1$) and at most one distance-3 neighbor of both 1 and $n+1$ in $T^{n+1}$. If not all these taxa actually exist (e.g. if there is no distance-3 neighbor of taxon 1 in $T^{n+1}$) or if some of them coincide, we disregard fewer taxa, but the important thing is that we disregard at most six. As we have at least ten taxa in the inductive step, this means that there are at least four taxa $w$, $x$, $y$ and $z$ left, which we do \emph{not} disregard. 

For these four taxa $w$, $x$, $y$ and $z$, we now construct characters $f_w$, $f_x$, $f_y$ and $f_z$, respectively, as follows: Character $f_w$ assigns state $a$ to taxa 1, $n+1$ and $w$ and $b$ to all other taxa, character $f_x$ assigns state $a$ to taxa 1, $n+1$ and $x$, and $b$ to all other taxa. Character $f_y$ assigns state $a$ to taxa 1, $n+1$ and $y$, and $b$ to all other taxa, and, finally, character $f_z$ assigns state $a$ to taxa 1, $n+1$ and $z$, and $b$ to all other taxa.

We now investigate these four characters. In the following, we denote by $f_w^n$, $f_x^n$,  $f_y^n$ and $f_z^n$ the restrictions of $f_w$, $f_x$, $f_y$ and $f_z$ to $X\setminus\{ n+1\} = \{ 1,\ldots, n\}$. 

\begin{itemize}
\item First note that $l(f_x,T^{n+1})=2$: The parsimony score has to be at least 1 as both states $a$ and $b$ are employed, and it cannot be more than 2 because two of the three $a$'s in $f_x$ are assigned to a cherry, namely $[1,n+1]$. Moreover, the score cannot be 1 because leaf $x$ has a distance of more than 3 to $1$, so the split $1, n+1,x | X \setminus\{1,n+1,x\}$ is not contained in $T^{n+1}$. The same applies to $f_w$, $f_y$ and $f_z$. This scenario is depicted in Figure \ref{fxonTn+1}. So in total, we know that $f_w$, $f_x$, $f_y$ and $f_z$ are all contained in part $A$ of alignment $A_2(T^{n+1})$. 

\begin{figure}
\center
\scalebox{.4}{\includegraphics{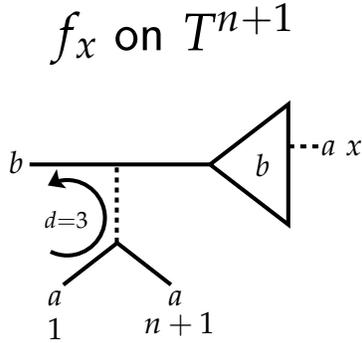}}
\caption{  \scriptsize On ${T^{n+1}}$, leaves 1 and $n+1$ form a cherry, so (as $n+1\geq 10$) they both have at most one neighbor at distance 3. Character $f_x$ (and, analogously, $f_y$ and $f_z$) assigns state $a$ to leaves 1, $n+1$ and $x$ (or $y$ or $z$, respectively) and $b$ to all other leaves. Note that $x$ (and $y$ and $z$) has a distance of more than 3 in ${T^{n+1}}$ to both leaves 1 and $n+1$, so an extra change is needed for the $a$ of leaf $x$. However, as 1 and $n+1$ form a cherry, they only require one common change. Therefore, the parsimony score of $f_x$ on $T^{n+1}$ is 2. The two change edges are represented by dashed lines. }
 \label{fxonTn+1}
\end{figure}

\item If we consider $f_w^n$, $f_x^n$, $f_y^n$ and $f_z^n$ on $\tilde{T}^n$, their respective parsimony scores can be at most 2, because $f_w^n$, $f_x^n$, $f_y^n$ and $f_z^n$ all contain only two $a$'s (so the changes could happen on the pending edges to leaves 1 and $w$, $x$, $y$ or $z$, respectively). 
\item If we consider $f_w$, $f_x$, $f_y$ and $f_z$ on $\tilde{T}$, their scores are 3. For instance, consider $f_x$: As $1$ and $n+1$ are not in a common cherry, even if we modified $f_x$ such that $x$ was in state $b$, the character would already have score 2, because $\tilde{T}$ does not contain the split $1,n+1 | X \setminus \{1,n+1\}$. But as $f_x$ assigns $x$ state $a$ and as $x$, $1$ and $n+1$ do not form a `chain' of length 3 (see above) in $\tilde{T}$, the parsimony score of $f_x$ is 3. Note that there are two cases: Either $x$, $1$ and $n+1$ all have distance more than 3 to each other, in which case the 3 substitutions can happen on their respective pending edges (cf. Figure \ref{fxontildeTNEW2}), or we have one `chain' of length 2 and one individual taxon (cf. Figure \ref{fxontildeTNEW}). However, both cases lead to a parsimony score of 3. The cases of $f_w$, $f_y$ and $f_z$ are analogous to that of $f_x$.

\end{itemize}

So in summary, we have found four characters, namely $f_w$, $f_x$, $f_y$ and $f_z$ in alignment part $A$, which have score at most 2 (actually even precisely 2, but this is not important here) on $\tilde{T}^n$, but have score 3 on $\tilde{T}$. Together with the fact that attaching an extra leaf cannot decrease the score of any character on any tree, this implies that attaching leaf $n+1$ strictly increases the score of alignment $A$ by at least 4 on any tree $\tilde{T}$ which does not put 1 and $n+1$ together in a cherry. Basically, this is due to the fact that the sequence of $n+1$ in $A$ is a precise copy of the sequence of taxon 1 (both are in state $a$ for all characters in $A$), which is why it cannot be optimal for alignment $A$ to separate taxa 1 and $n+1$.

Anyway, in summary, we now know for any tree $\tilde{T}$ which does not contain the cherry $[1,n+1]$:

\begin{equation}\label{+4forA} l(A,\tilde{T})\geq l(A_2(T^n),\tilde{T}^n)+4,\end{equation}
 
where the 4 is due to the fact that the score of no character can improve when leaf $n+1$ is attached, but the scores of at least four characters (namely $f_w$, $f_x$, $f_y$ and $f_z$) strictly get worse.

\item Next we consider Part $B$ of $A_2(T^{n+1})$, i.e. the part where leaves 1 and $n+1$ are in different states. We will show that $T^{n+1}$ is surprisingly {\em not} a maximum parsimony tree for this part of the alignment, but that its parsimony score differs only by 2 from the optimal score.

Note that alignment $B$ is such that leaf 1 is assigned state $a$, whereas leaf $n+1$ is assigned state $b$, which means that for each character $f \in B$, the cherry $[1,n+1]$ of $T^{n+1}$ contributes 1 to the parsimony score of 2 (we know $l(f,T^{n+1})=2$ because $f \in B\subseteq A_2(T^{n+1})$). This implies that the rest of the tree contributes precisely one change to the parsimony score of $f$ on $T^{n+1}$ for each $f\in B$. Consider such a character $f$ and a most parsimonious extension of $f$ on $T^{n+1}$. Then, there is one change on the cherry $[1,n+1]$, and there cannot be a change on the edge leading to this cherry. This is due to the fact that if there was such a change on the third edge incident to the node $u$ adjacent to leaves 1 and $n+1$, it would save a change to assign $u$ the other one of the two states (the cherry would still require only one change). So relabeling $u$ would save a change, which means that the original extension would not have been most parsimonious, which would be a contradiction.

We now consider the tree $\hat{T}$ with $n-1$ leaves which results from $T^{n+1}$ when we delete cherry $[1,n+1]$ and node $u$ as well as all edges incident with $u$ and if we suppress the resulting node of degree 2. Note that for each character $f$ in $B$, as the cherry $[1,n+1]$ contributes one change to the parsimony score of $f$, $\hat{T}$ also contributes exactly one change to each most parsimonious extension of $f$ on $T^{n+1}$. Thus, if we disregard cherry $[1,n+1]$, each such character $f$ corresponds to an edge of $\hat{T}$ (and thus a split of the taxon set $\{2,\ldots,n\}$). We will now show that in turn, every edge of $\hat{T}$ corresponds to \emph{two} characters in $B$. 

Therefore, consider alignment $\hat{B}$ which results from $B$ by deleting lines 1 and $n+1$. Note that in this alignment, all splits induced by the characters appear exactly twice. This is due to the fact that a split into two disjoint subsets $X_1$ and $X_2$ of taxon set $\{2,\ldots,n\}$ can be such that the taxa in $X_1$ are assigned $a$ (and thus are grouped together with leaf 1) and the taxa in $X_2$ are assigned $b$ (and thus are grouped together with leaf $n+1$) or vice versa. So for each character $f \in B$ there is a character $\bar{f} \in B$ such that the roles of $X_1$ and $X_2$ are interchanged, and we thus have to consider all edges of the binary phylogenetic tree $\hat{T}$ on taxon set $\{2,\ldots,n\}$ twice. We denote by $A_1(\hat{T})$ the part of this alignment $\hat{B}$ which contains all splits of $\hat{T}$ such that taxon 2 is in state $a$, and by $\bar{A}_1(\hat{T})$ the part of $\hat{B}$ which contains all splits of $\hat{T}$ such that taxon 2 is in state $b$. Consider again Figure \ref{alignmentA2} for clarification of this decomposition.\footnote{Note that the decomposition can also be verified enumeratively: We know that $|A|=|A_2(T_n)|$, and by Theorem \ref{thm:lengthAk}, we can thus conclude that  $|A|=2(n-3)^2$. Moreover, any binary tree with $n-1$ leaves has $2(n-1)-3=2n-5$ edges, so this applies also to $\hat{T}$. Therefore, we derive $2\cdot |A_1(\hat{T})|=|\hat{B}|=|B|=4n-10$. So in total, we get $|A_2(T^{n+1})|=|A|+|B|= 2n^2-8n+8=2((n+1)-3)^2$. This in turn equals $|A_2(T^{n+1})|$ by Theorem \ref{thm:lengthAk}, which confirms this observation.}

We now distinguish two cases. \begin{enumerate} \item We first consider any binary phylogenetic tree $\tilde{T}$ on taxon set $\{1,\ldots,n+1\}$ which contains the cherry $[1,n+1]$. As leaves 1 and $n+1$ are in different states for all characters $f \in B$, this means that the cherry contributes 1 to the parsimony score of $f$ on $\tilde{T}$ for all $f \in B$. In particular, deleting the cherry from $f$ in order to obtain a character $\hat{f} \in \hat{B}$ would strictly decrease the parsimony score by 1. 
However, as $l(\hat{f},T) \geq 1$ for all binary phylogenetic trees $T$ on taxon set $\{2,\ldots,n\}$ (as otherwise $\hat{f}$ would be constant and could not correspond to an edge of $\hat{T}$), this leads to $l(f,\tilde{T})=l(\hat{f},\tilde{T}|_{\{2,\ldots,n\}}) +1 \geq 1+1=2$. Here, $  \tilde{T}|_{\{2,\ldots,n\}}$ denotes the tree resulting from $\tilde{T}$ when cherry $[1,n+1]$ as well as its parent node $u$ are removed and the resulting degree 2 node is suppressed. In particular, this leads to $l(f,\tilde{T})\geq 2=l(f,T^{n+1})$. As this holds for all characters $f \in B$,  for such a tree $\tilde{T}$ we have $l(B,\tilde{T})\geq l(B,T^{n+1})$.

\item Next we consider any binary phylogenetic tree $\tilde{T}$ on taxon set $\{1,\ldots,n+1\}$ which does \emph{not} contain the cherry $[1,n+1]$.

Consider any character $f \in B$, and denote by $\hat{f}$ the version of $f$ in which leaves 1 and $n+1$ have been deleted. Then, as $\hat{f}$ already corresponds to a split of $\hat{T}$ (as explained above), we know that $l(f,\tilde{T})\geq 1$, and $f$ contains both $a$ and $b$ more than once (because even without considering leaves 1 and $n+1$, $f$ already induced a split and was thus not constant). So if we want to consider the difference between $l(B,\tilde{T})$ and $l(B,T^{n+1})$, we know that the score of each individual character $f \in B$, which is 2 on $T^{n+1}$ (as $B \subseteq A_2(T^{n+1})$), can improve by at most 1 for any other tree (as we can only go down from 2 to 1, but not to 0). Moreover, in order to achieve this, $f$ has to correspond to an edge $e_f=\{u,v\}$ of $\tilde{T}$. 

So now assume we have such a character $f \in B$ that corresponds to an edge of $\tilde{T}$ and contains at least two $a$'s and at least two $b$'s, and with leaf 1 in state $a$ and leaf $n+1$ in state $b$. Then, the edge $e_f=\{u,v\}$ corresponding to $f$ in $\tilde{T}$ must separate leaf 1 from leaf $n+1$, i.e. it must lie on the path from 1 to $n+1$ in $\tilde{T}$ (otherwise, more than one change would be required). Without loss of generality, we assume that $u$ is closer to leaf 1 than $v$ (measured in terms of the number of edges on the path connecting these nodes), i.e. the most parsimonious extension of $f$ on $\tilde{T}$ would assign $u$ state $a$ and $v$ state $b$. Note that neither $u$ nor $v$ can be equal to leaves 1 or $n+1$, because this would imply that $e_f$ coincides with an edge incident to either leaf 1 or leaf $n+1$. But this cannot happen, because then only one $a$ or only one $b$ would be split from the rest, but we know we have more than one of each state in $f$.

We now distinguish two cases:

\begin{itemize}
\item $e_f=\{u,v\}$ is such that $u$ is not adjacent to leaf 1 \emph{and} $v$ is not adjacent to leaf $n+1$, or
\item $u$ is adjacent to leaf 1 or $v$ is adjacent to leaf $n+1$ (or both).
\end{itemize} 

If neither $u$ is adjacent to leaf 1 nor $v$ to leaf $n+1$, then such a character $f$ on $\tilde{T}$ looks as depicted in Figure \ref{unproblematicCase1}. Assume without loss of generality that $f$ lies in the part of $B$ that corresponds to $A_1(\hat{T})$, and consider its corresponding character $\bar{f}$ in the part of $B$ corresponding to $\bar{A}_1(\hat{T})$, i.e. the character which, when 1 and $n+1$ are deleted, induces the same split as $f$ but has the roles of $a$ and $b$ reversed. 

\begin{figure} 
\center
\scalebox{.4}{\includegraphics{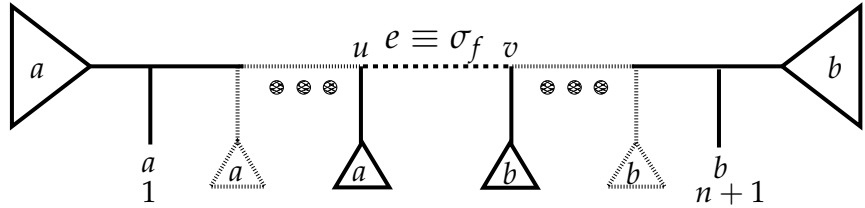}}
\caption{  \scriptsize A tree $\tilde{T}$ which contains an edge $e=\{u,v\}$ corresponding to split $\sigma_f$ induced by a character $f\in B$, but such that $u$ and $v$ are both neither adjacent to leaf 1 or leaf $n+1$. In this case, we have $l(f,\tilde{T})=1$ (as a change only needs to happen on $e$, which is highlighted by the dashed edge), but -- as is shown in Figure \ref{unproblematicCase2} -- in this case, $\bar{f}$ has a parsimony score of 3 on $\tilde{T}$.}
\label{unproblematicCase1}
\end{figure}

Now if $f$  on $\tilde{T}$ looks as depicted in Figure \ref{unproblematicCase1}, i.e. if $f$ induces an edge on the path from 1 to $n+1$ whose endpoints are neither adjacent to 1 nor to $n+1$, then $\bar{f}$ has a parsimony score of 3, as depicted in Figure \ref{unproblematicCase2}. Therefore, $l(f,\tilde{T})+l(\bar{f},\tilde{T})=1+3=4=2+2=l(f,T^{n+1})+l(\bar{f},T^{n+1})$. Thus, such a character $f$ in $B$ which decreases the score by 1 (compared to the score on $T^{n+1}$) comes paired with a character $\bar{f}$ also in $B$, which increases the score by 1, which means that there is no net difference between $\tilde{T}$ and $T^{n+1}$ concerning the sum of parsimony scores of $f$ and $\bar{f}$.

\begin{figure} 
\center
\scalebox{.4}{\includegraphics{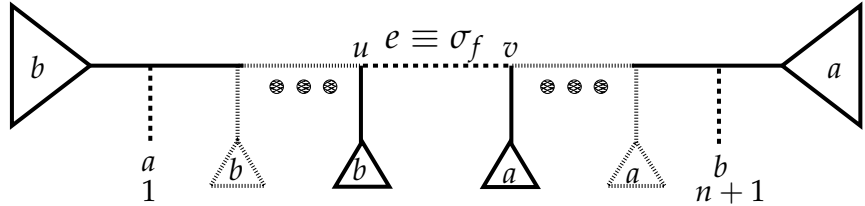}}
\caption{  \scriptsize Tree $\tilde{T}$ from Figure \ref{unproblematicCase1} with character $\bar{f}$. Here, we have $l(\bar{f},\tilde{T})=3$ (as we need changes on $e$ as well as the edges leading to leaves 1 and $n+1$, respectively -- these edges are dashed to show the changes). So, even given that -- as is shown in Figure \ref{unproblematicCase1} -- in this case, $f$ only has a parsimony score of 1 on $\tilde{T}$, the net impact of $f$ and $\bar{f}$ on $l(B,\tilde{T})$ is $1+3=4$, i.e. it is the same as on $T^{n+1}$, where it is $2+2=4$.}
\label{unproblematicCase2}
\end{figure}

If, on the other hand, $u$ is adjacent to leaf 1, then such a character $f$ looks on $\tilde{T}$ as depicted in Figure \ref{problematicCase1}, and its corresponding character $\bar{f}$ would come with a parsimony score of 2 as depicted in Figure \ref{problematicCase2}. So in total, we would have $l(f,\tilde{T})+l(\bar{f},\tilde{T})=1+2=3<4=2+2=l(f,T^{n+1})+l(\bar{f},T^{n+1})$. If $v$ is adjacent to $n+1$, the scenario is analogous. 
\end{enumerate}

\begin{figure} 
\center
\scalebox{.4}{\includegraphics{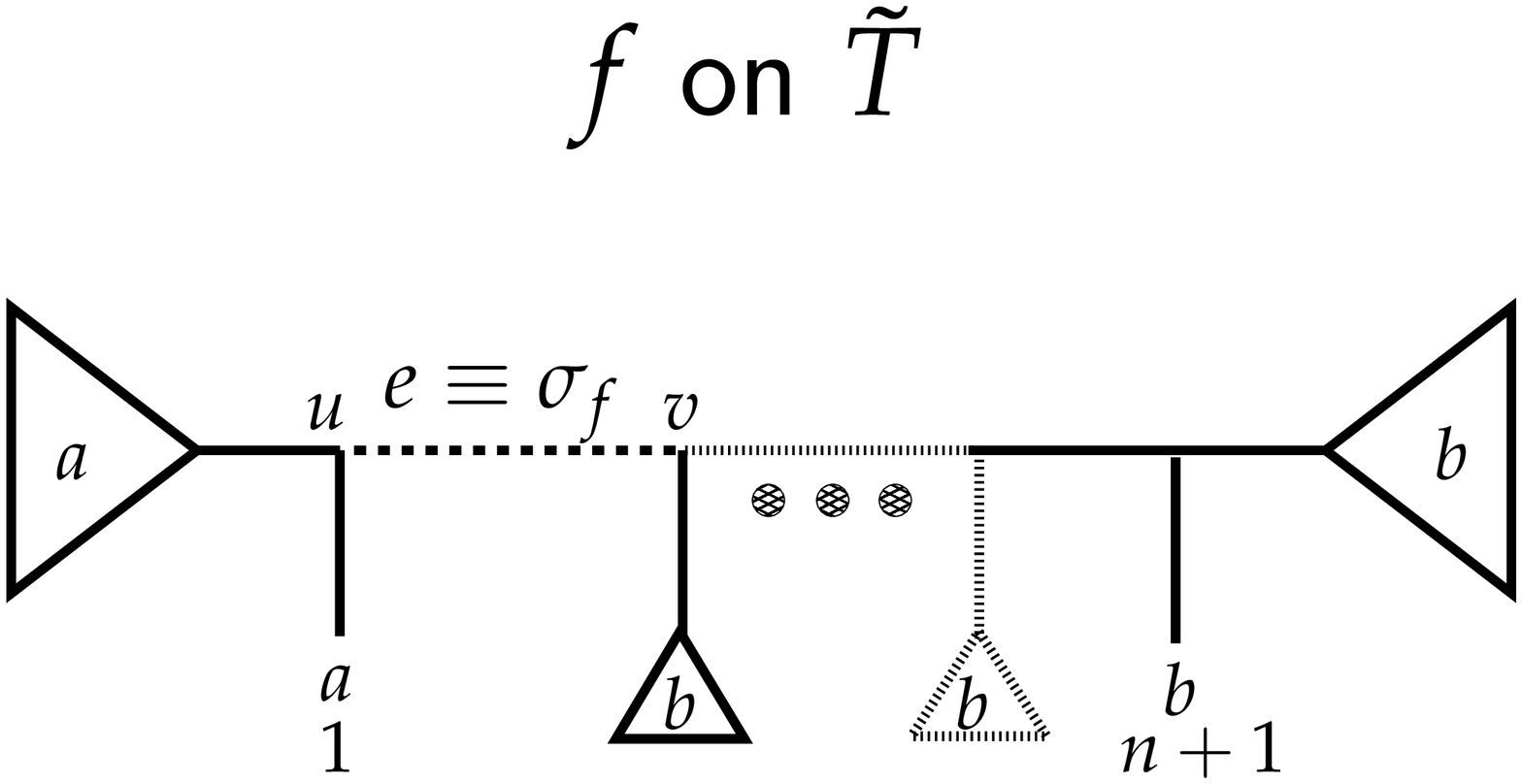}}
\caption{  \scriptsize A tree $\tilde{T}$ which contains an edge $e=\{u,v\}$ corresponding to split $\sigma_f$ induced by a character $f\in B$, such that $u$ is adjacent to leaf 1 (or, analogously, $v$ is adjacent to leaf $n+1$). In this case, we have $l(f,\tilde{T})=1$ (as a change only needs to happen on $e$, which is highlighted by the dashed edge), and -- as is shown in Figure \ref{problematicCase2} -- in this case, $\bar{f}$ has a parsimony score of 2 on $\tilde{T}$.}
\label{problematicCase1}
\end{figure}

\begin{figure} 
\center
\scalebox{.4}{\includegraphics{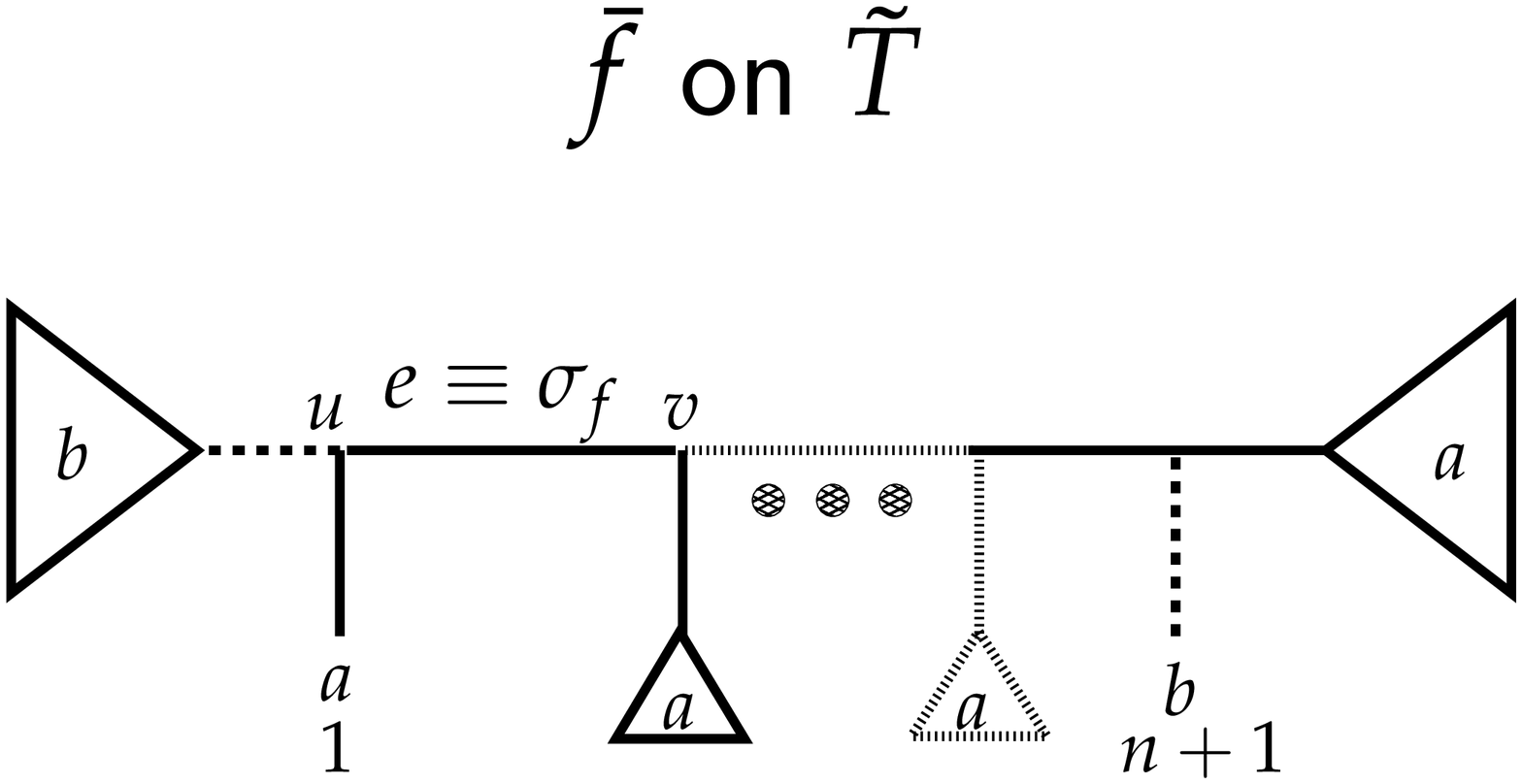}}
\caption{  \scriptsize Tree $\tilde{T}$ from Figure \ref{problematicCase1} with character $\bar{f}$. Here, we have $l(\bar{f},\tilde{T})=2$ (as we need changes on the third edge incident to $u$, i.e. other than $e$ and the edge leading to leaf 1, as well as on the edge leading to leaf $n+1$ -- these edges are dashed to show the changes). So, considering given that -- as is shown in Figure \ref{problematicCase1} -- in this case, $f$ only has a parsimony score of 1 on $\tilde{T}$, the net impact of $f$ and $\bar{f}$ on $l(B,\tilde{T})$ is $1+2=3$, which is strictly less than on $T^{n+1}$, where it is $2+2=4$.}
\label{problematicCase2}
\end{figure}

So in summary, depending on the positions of leaves 1 and $n+1$ in $\tilde{T}$, we conclude for $l(B,\tilde{T})$:

\begin{itemize}
\item If $\tilde{T}$ contains the cherry $[1,n+1]$, then $l(B,\tilde{T}) \geq l(B,T^{n+1})$.
\item If $\tilde{T}$ contains precisely one inner edge $e=\{u,v\}$ on the path from leaf 1 to leaf $n+1$, then $l(B,\tilde{T}) \geq l(B,T^{n+1})-1$ (because then, the only pair of characters $f$, $\bar{f}$ that can improve the score of $B$ by 1 corresponds to the case where $u$ is adjacent to leaf 1 and at the same time $v$ is adjacent to leaf $n+1$ or vice versa). 
\item If $\tilde{T}$ contains more than one inner edge on the path from leaf 1 to leaf $n+1$, then $l(B,\tilde{T}) \geq l(B,T^{n+1})-2$, as then there are two edges on the path from 1 to $n+1$ which induce splits $f_1$ and $f_2$, respectively, for which $l(f_1,\tilde{T})=l(f_2,\tilde{T})=1$ and $l(\bar{f}_1,\tilde{T})=l(\bar{f}_2,\tilde{T})=2$. 
\end{itemize}

\item We now summarize our results. Let $\tilde{T} \neq T^{n+1}$ be any binary phylogenetic tree on the same taxon set as $T^{n+1}$. Then, there are two cases: either $\tilde{T}$ contains cherry $[1,n+1]$ or not. 

\begin{itemize}
\item If $\tilde{T}$ contains cherry $[1,n+1]$, then it cannot contain $T^n$ as a subtree (because otherwise $\tilde{T}=T^{n+1}$), and thus, Inequality \eqref{clear} is strict, which is why, by Equation \eqref{MPforA} we have 

\begin{equation*} l(A,\tilde{T})>l(A,T^{n+1}).\end{equation*}

Moreover, if $\tilde{T}$ contains cherry $[1,n+1]$, then we have seen that\begin{equation*}l(B,\tilde{T})\geq l(B,T^{n+1}).\end{equation*}

So, in summary, if  $\tilde{T}$ contains cherry $[1,n+1]$, then (as $A_2(T^{n+1})=A.B$), we get 

\begin{eqnarray}\nonumber l(A_2(T^{n+1}),\tilde{T})&=&l(A,\tilde{T})+l(B,\tilde{T})\\ \nonumber &>& l(A,T^{n+1})+l(B,T^{n+1})\\ \nonumber &=& l(A_2(T^{n+1}),T^{n+1}).\end{eqnarray}

So the parsimony score of $A_2(T^{n+1})$ on $T^{n+1}$ is strictly smaller than on any other tree $\tilde{T}$ which also contains cherry $[1,n+1]$.

\item If, on the other hand, $\tilde{T}$ does not contain cherry $[1,n+1]$, we have shown that 
\begin{equation*} l(A,\tilde{T})\geq l(A_2(T^n),\tilde{T}^n)+4.\end{equation*}

Together with Equation \eqref{MPforA}, this leads to 

\begin{equation*} l(A,\tilde{T})\geq l(A,T^{n+1})+4.\end{equation*}

Moreover, for such trees we have seen that 
\begin{equation*} l(B,\tilde{T})\geq l(B,T^{n+1})-2.\end{equation*}

So, in summary, if  $\tilde{T}$ does not contain cherry $[1,n+1]$, then we get

\begin{eqnarray*} l(A_2(T^{n+1}),\tilde{T})&=&l(A,\tilde{T})+l(B,\tilde{T})\\&\geq& l(A,T^{n+1})+4+l(B,T^{n+1}) -2\\&=& l(A_2(T^{n+1}),T^{n+1})+2.\end{eqnarray*}

So the parsimony score of $A_2(T^{n+1})$ on $T^{n+1}$ is strictly smaller than on any other tree $\tilde{T}$ which does not contain the cherry $[1,n+1]$.
\end{itemize}

Therefore, in both cases (whether $\tilde{T}$ contains the cherry $[1,n+1]$ or not), we conclude that the parsimony score of $A_2(T^{n+1})$ on $\tilde{T}$ is strictly larger than that of $T^{n+1}$, which implies that $T^{n+1}$ is indeed the unique maximum parsimony tree of alignment $A_2(T^{n+1})$. This completes the proof of Theorem \ref{casek2}.

\end{enumerate}
\end{proof}

We need one last lemma before we can conclude this section with our final result.

\begin{lemma} \label{concatlemma} Let $T$ be a binary phylogenetic $X$-tree such that $T$ is a maximum parsimony tree of alignments $A$ and $B$ and for one of them even unique with this property. Then, $T$ is also the unique maximum parsimony tree of the concatenated alignment $A.B$.
\end{lemma}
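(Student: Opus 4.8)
The plan is to exploit the additivity of the parsimony score over concatenation, which is immediate from the definition: since the alignment $A.B$ is simply the union of the character sets $A$ and $B$, and since $l(\cdot, \tilde{T})$ of an alignment is defined as the sum of the parsimony scores of its constituent characters, we have for every binary phylogenetic $X$-tree $\tilde{T}$ that $l(A.B, \tilde{T}) = l(A, \tilde{T}) + l(B, \tilde{T})$. This reduces the whole statement to a comparison of two sums, one inequality being strict and one not.

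Concretely, I would first fix notation by assuming without loss of generality that $A$ is the alignment for which $T$ is the \emph{unique} maximum parsimony tree (if instead $B$ has this property, the roles of $A$ and $B$ are simply interchanged, and the argument is symmetric since concatenation is commutative as a union of sets). Then I would record the two hypotheses in inequality form. Because $T$ is the unique maximum parsimony tree for $A$, for every $\tilde{T} \neq T$ we have the strict inequality $l(A, \tilde{T}) > l(A, T)$. Because $T$ is (at least) a maximum parsimony tree for $B$, for every $\tilde{T}$ we have the non-strict inequality $l(B, \tilde{T}) \geq l(B, T)$.

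Finally, I would combine these term by term. Let $\tilde{T} \neq T$ be arbitrary. Adding the strict inequality for $A$ to the non-strict inequality for $B$ and using additivity gives
\begin{equation*}
l(A.B, \tilde{T}) = l(A, \tilde{T}) + l(B, \tilde{T}) > l(A, T) + l(B, T) = l(A.B, T),
\end{equation*}
where the middle step is strict because one of the two summed inequalities is strict (a strict plus a non-strict inequality yields a strict one). Since this holds for every $\tilde{T} \neq T$, the tree $T$ is the unique minimizer of $l(A.B, \cdot)$, i.e. the unique maximum parsimony tree of $A.B$, as claimed.

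There is essentially no hard obstacle here: the only points requiring care are the justification of additivity (a direct appeal to the definition of the parsimony score of an alignment) and the observation that strictness is inherited from the single alignment for which uniqueness is assumed. The latter is precisely why the hypothesis demands uniqueness for \emph{one} of the two alignments rather than merely optimality for both; without a strict inequality on at least one side, one could only conclude that $T$ is \emph{a} maximum parsimony tree of $A.B$, not that it is unique.
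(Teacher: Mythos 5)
Your proposal is correct and follows exactly the same route as the paper's own proof: additivity of the parsimony score over concatenation, a without-loss-of-generality choice of which alignment carries the uniqueness hypothesis, and the term-by-term addition of a strict and a non-strict inequality to obtain strictness for $A.B$. Nothing is missing, and no further comparison is needed.
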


\begin{proof} Assume without loss of generality that $T$ is the unique maximum parsimony tree for alignment $A$, and it is also a maximum parsimony tree of alignment $B$ (not necessarily unique). Then we have for all binary phylogenetic $X$-trees $\tilde{T}$ with $\tilde{T}\neq T$: $l(A,\tilde{T})>l(A,T)$ and $l(B,\tilde{T})\geq l(B,T)$. Moreover, as the parsimony score of an alignment by definition is just the sum of the parsimony scores of its characters, it can be a easily seen that $l(A.B,T)=l(A,T)+l(B,T)$ and $l(A.B,\tilde{T})=l(A,\tilde{T})+l(B,\tilde{T})$. Therefore, we get for all binary phylogenetic $X$-trees $\tilde{T}$ with $\tilde{T}\neq T$:
$$ l(A.B,\tilde{T}) =\underbrace{ l(A,\tilde{T})}_{>l(A,T)}+\underbrace{l(B,\tilde{T})}_{\geq l(B,T)} > l(A,T)+l(B,T) = l(A.B,T).$$ Thus, $T$ is the unique maximum parsimony tree for $A.B$. This completes the proof.
\end{proof}

We end this section with the following corollary, which together with Lemma \ref{concatlemma} generalizes Theorem \ref{casek2}, which applies only to $k=2$, to the case with $k\leq 2$. 

\begin{corollary}[Generalization of Theorem \ref{casek2}] \label{concat} Let $T$ be a binary phylogenetic $X$-tree with $|X|\geq 9$. Then, $T$ is the unique maximum parsimony tree for the alignments $A_0.A_1(T)$, $A_0.A_2(T)$, $A_1(T).A_2(T)$ and $A_0.A_1(T).A_2(T)$.
\end{corollary}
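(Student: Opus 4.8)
The plan is to derive all four statements as repeated applications of Lemma \ref{concatlemma}, drawing the required uniqueness from Corollary \ref{cor:buneman} (for $A_1(T)$) and from Theorem \ref{casek2} (for $A_2(T)$). The one preliminary fact I would record, because it is used in every case, concerns $A_0$: since $A_0$ consists solely of the constant character $f=aa\ldots a$, we have $l(A_0,\tilde{T})=0$ for \emph{every} binary phylogenetic $X$-tree $\tilde{T}$. Hence every tree, and in particular $T$, is a maximum parsimony tree for $A_0$ (though never a unique one). This is exactly what is needed for $A_0$ to serve as the ``merely most parsimonious'' alignment in Lemma \ref{concatlemma}.

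With this in hand, the first three cases are immediate. For $A_0.A_1(T)$, apply Lemma \ref{concatlemma} with the unique alignment taken to be $A_1(T)$: by Corollary \ref{cor:buneman}, $T$ is the unique maximum parsimony tree for $A_1(T)$, and by the observation above it is a maximum parsimony tree for $A_0$, so $T$ is the unique maximum parsimony tree for $A_0.A_1(T)$. For $A_0.A_2(T)$, the same argument applies with $A_2(T)$ in place of $A_1(T)$, now invoking Theorem \ref{casek2}; this is where the hypothesis $|X|\geq 9$ enters. For $A_1(T).A_2(T)$, both summands already have $T$ as their unique maximum parsimony tree, so a single application of Lemma \ref{concatlemma} (taking either one as the ``unique'' alignment and the other as the ``most parsimonious'' one, which holds automatically) yields that $T$ is the unique maximum parsimony tree for the concatenation.

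For the full concatenation $A_0.A_1(T).A_2(T)$ I would iterate. Because the dot operation is just the union of (disjoint) character sets and the parsimony score is additive over that union for every tree, the operation is associative, so I may write $A_0.A_1(T).A_2(T)=\bigl(A_1(T).A_2(T)\bigr).A_0$. By the previous case $T$ is the unique maximum parsimony tree for $A_1(T).A_2(T)$, and it is a maximum parsimony tree for $A_0$; a final application of Lemma \ref{concatlemma} then gives that $T$ is the unique maximum parsimony tree for $A_0.A_1(T).A_2(T)$.

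The hard part here is essentially bookkeeping rather than a genuine mathematical obstacle, since all the substance lives in Theorem \ref{casek2}, Corollary \ref{cor:buneman} and Lemma \ref{concatlemma}. The two points that warrant explicit care are, first, verifying that $A_0$ satisfies the hypothesis of Lemma \ref{concatlemma} — which it does trivially, as every tree scores $0$ on the constant character — and second, justifying the iterated application in the four-fold case by noting that concatenation is union of character sets and that $l(\,\cdot\,,\tilde{T})$ is additive over this union for every $\tilde{T}$, so that grouping the alignments does not affect any parsimony score and Lemma \ref{concatlemma} may be applied to the grouped concatenations without loss.
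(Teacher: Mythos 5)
Your proposal is correct and follows essentially the same route as the paper's own proof: the observation that every tree (hence $T$) is a maximum parsimony tree for $A_0$, uniqueness for $A_1(T)$ from Corollary \ref{cor:buneman} and for $A_2(T)$ from Theorem \ref{casek2}, and then Lemma \ref{concatlemma} to handle the concatenations. Your explicit treatment of associativity and the iterated application for $A_0.A_1(T).A_2(T)$ is a bit more careful than the paper, which invokes the lemma for all four cases at once, but it is the same argument in substance.
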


\begin{proof} For $A_0$, which consists only of the constant character $f=a \ldots a$, actually all binary phylogenetic trees are maximum parsimony trees, because this character does not require a change on any tree. So clearly, $T$ is a maximum parsimony tree for $A_0$ (but $T$ is not unique with this property). By Corollary \ref{cor:buneman}, $T$ is the unique maximum parsimony tree for $A_1(T)$, and by Theorem \ref{casek2}, $T$ is also the unique maximum parsimony tree for $A_2(T)$. So by Lemma \ref{concatlemma}, $T$ is the unique maximum parsimony tree for all concatenations stated above. This completes the proof.
\end{proof}

We end this section by pointing out that if an alignment contains any of the concatenated alignments of Corollary \ref{concat} as well as additional copies of any characters that are contained $A_0$, $A_1(T)$ or $A_2(T)$, $T$ will still be the unique most parsimonious tree.\footnote{This corresponds to turning $A_0$, $A_1(T)$ or $A_2(T)$ into multisets rather than sets.} This implies that if we have such alignments with only up to two changes per character, maximum parsimony will recover the correct tree -- which justifies the usage of maximum parsimony in such instances. However, note that for $A_1(T)$, uniqueness will get lost when fewer characters are considered, i.e. when not all characters of $A_1(T)$ are present. Moreover, for $A_2(T)$, additionally the property of being a maximum parsimony tree might be lost if not \emph{all} characters of the alignment are there. For instance, as the proof of Theorem \ref{casek2} shows, if only part $B$ of alignment $A_2(T)$ is considered, there are other trees that have a strictly better score.

\section{Discussion and Outlook}
It was the main aim of this manuscript to prove the special case of $k=2$ of Conjecture \ref{conj}, because this conjecture is of both mathematical and biological interest as maximum parsimony is often assumed to be justified for evolutionary tree estimation  when the number of changes is small \citep[Chapter 5]{Semple2003}). We have shown that for this conjecture to hold, we require a Buneman-type necessary condition for the $A_2$ alignments, namely that they are unique for each tree (cf. Proposition \ref{A2definesT}). While we carefully analyzed our main theorem for potentially shorter proofs (e.g. using a variation of Menger's theorem (cf. for instance \citep[Lemma 5.1.7 and Corollary 5.1.8]{Semple2003}) or the Erd\"os-Sz\'ekely theorem (cf. \citep[Theorems 3 and 4]{ErdosSzekely})), the proofs given in this manuscript are the most concise ones we could achieve. We conjecture, though, that it might be possible to exploit either one of the mentioned theorems or some other combinatorial properties in order to derive a shorter proof. 

Moreover, of course our proof that Conjecture \ref{conj} holds (cf. Theorem \ref{casek2}) when $k=2$ is only a first step towards proving (or disproving) the conjecture for all values of $k$ with $k<\frac{n}{4}$, and more research is needed to tackle this general case. A particular difficulty arises from the fact that the Buneman-type necessary condition does no longer hold in general when $k>2$. For instance, consider the two trees $T_1$ and $T_2$ depicted in Figure \ref{A3problem} and their respective $A_3$ alignments. It can be easily verified that $A_3(T_1)=A_3(T_2)$, even though $T_1\neq T_2$. Alignment $A_3(T_1)=A_3(T_2)$ is depicted in Figure \ref{problemal6taxak3}. 

\begin{figure} 
\center
\scalebox{.4}{\includegraphics{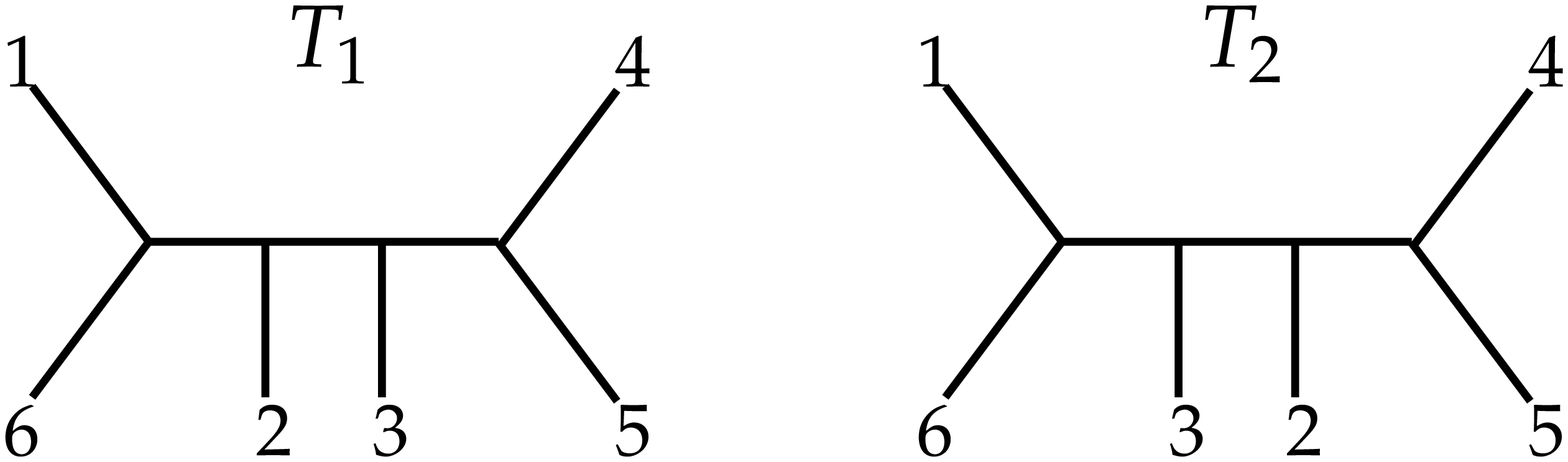}}
\caption{  \scriptsize Trees $T_1$ and $T_2$ which share the same $A_3$ alignment, which is depicted in Figure \ref{problemal6taxak3}.  }
\label{A3problem}
\end{figure}

\begin{figure} 
\center
\scalebox{1}{\includegraphics{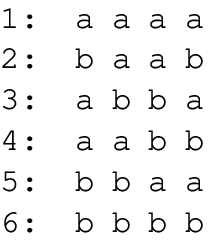}}
\caption{  \scriptsize Alignment $A_3(T_1)=A_3(T_2)$ for $T_1$ and $T_2$ as depicted in Figure \ref{A3problem}. As the two trees lead to identical $A_3$ alignments, it is impossible for maximum parsimony (or any other tree reconstruction criterion) to tell the two trees apart based on this alignment.  }
\label{problemal6taxak3}
\end{figure}

In fact, as depicted in Figures \ref{Anhalfproblem} and \ref{Anhalfproblem2}, the $A_3$ example can be generalized to all values of $k\geq 3$:

\begin{figure} 
\center
\scalebox{.3}{\includegraphics{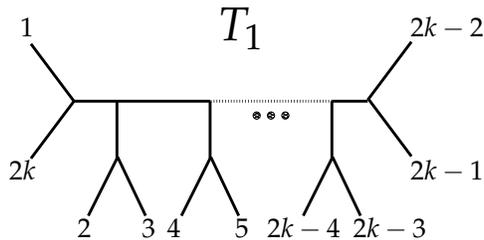}}
\caption{  \scriptsize Tree $T_1$ with $n=2k$ leaves, which has the same $A_k$ alignment as $T_2$, which is depicted in Figure \ref{Anhalfproblem2}. }
\label{Anhalfproblem}
\end{figure}

\begin{figure} 
\center
\scalebox{.3}{\includegraphics{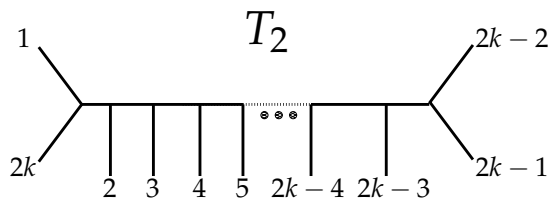}}
\caption{  \scriptsize Tree $T_2$ with $n=2k$ leaves, which has the same $A_k$ alignment as $T_1$, which is depicted in Figure \ref{Anhalfproblem}. }
\label{Anhalfproblem2}
\end{figure}

\begin{problem} \label{prob1} Let $k\geq 3$. Then there exist two binary phylogenetic trees $T_1$ and $T_2$ on taxon set $X=\{1,\ldots,n=2k\}$ such that $T_1\neq T_2$, but $A_k(T_1)=A_k(T_2)$. 
\end{problem}

A short proof that $T_1$ and $T_2$ from Figures \ref{Anhalfproblem} and \ref{Anhalfproblem2} indeed share the same $A_k$ alignment is given in the appendix.

However, note that this does not disprove Conjecture \ref{conj}, as the problematic example stated here requires $n=2k$ and thus $k=\frac{n}{2}$, but the conjecture actually requires $k<\frac{n}{4}$. So in order to prove Conjecture \ref{conj} for $k>3$, one will first have to prove that when $k<\frac{n}{4}$, we have that $T_1\neq T_2$ implies $A_k(T_1)\neq A_k(T_2)$. It is remarkable that for the case $k=2$, this statement does not depend on $n$ at all, but that this changes immediately for $k=3$. This is certainly an interesting topic for future research. Another question to be addressed is the investigation of the behavior of non-binary data. In \citep{pablo} the authors also considered this case briefly, but the exponential size of the respective alignments only allowed for exhaustive tree searches for up to $n=12$ taxa. A mathematical examination of this setting, particularly of the case of quaternary data like DNA or RNA, would be of high relevance also with regards to biological applications.

\section*{Acknowledgements} I wish to thank Mike Steel for insightful discussions on the topic as well as for suggesting the generalization of the example presented in Figure \ref{A3problem} to the case presented in Figures \ref{Anhalfproblem} and \ref{Anhalfproblem2}. Moreover, I want to thank Kristina Wicke and Lina Herbst for very helpful discussions on the topic of the present manuscript as well as concerning related questions. Last but not least, I wish to thank two anonymous reviewers for helpful comments on an earlier version of this manuscript.

\section*{Appendix} Here, we present a short proof for the fact that $T_1$ and $T_2$ from Figures  \ref{Anhalfproblem} and \ref{Anhalfproblem2} with $k>3$ share the same $A_k$ alignment (note that for the case $k=3$, we already presented alternative trees and their $A_3$ alignment in Figures \ref{A3problem} and \ref{problemal6taxak3}, so $k>3$ covers all remaining cases). As we did throughout the manuscript, we again assume without loss of generality that leaf 1 is in state $a$ for any character in $A_k$.

First consider $T_1$ on $n=2k$ leaves, where $k>3$. We now construct a set $B_k$ of characters with parsimony score $k$ on $T_1$, i.e. $B_k \subseteq A_k(T_1)$ as follows: $B_k$ shall consist of all binary characters which assign state $a$ to leaf 1, state $b$ to leaf $2k$, and all other cherries $[i,i+1]$ for $i =2, \ldots , 2k-2$ shall be such that the states assigned to the elements of the cherries are different. This leads to $|B_k|=2^{k-1}$, because the cherry $[1,2k]$ is identical for all characters in $B_k$, but all other $k-1$ cherries have two choices: $i$ can be assigned $a$ and $i+1$ can be assigned $b$ or vice versa. Note that each character in $B_k$ clearly has parsimony score $k$ on $T_1$ as depicted in Figure \ref{Anhalfproblem}, because no matter which state is chosen for the node to which $i$ and $i+1$ are adjacent, each cherry always requires one change. So the parsimony score of each character in $B_k$ is at least equal to $k$ (because the first cherry $[1,2k]$ also contributes one such change). On the other hand, for a binary character on $n=2k$ taxa, the maximum number of changes required is known to be $n/2=2k/2=k$ (cf. for instance \citep[Lemma 3.13]{fischer_kelk}). So all characters in $B_k$ must have parsimony score precisely $k$. Thus, $B_k \subseteq A_k(T_1)$, and $|B_k|=2^{k-1}$. However, using Theorem \ref{thm:lengthAk}, we get: $$|A_k(T_1)|=\frac{1}{2} \cdot  \frac{2\cdot 2k-3k}{k} \cdot {2k-k-1 \choose k-1} \cdot 2^k = 2^{k-1}.$$ So in total,  $B_k \subseteq A_k(T_1)$ and $|B_k|=2^{k-1}= |A_k(T_1)|$, which shows that $B_k=A_k(T_1)$. So all characters in $A_k(T_1)$ can be constructed by assigning leaf 1 state $a$, leaf $2k$ state $b$ and all cherries $[i,i+1]$ for $i =2, \ldots , 2k-2$ two different states, respectively.
 \par\vspace{0.5cm} 
 Next we consider the characters of $B_k$ on $T_2$ as depicted in Figure \ref{Anhalfproblem2}. Take any such character and highlight the path from leaf 1 to leaf $2k$, as well as the paths from $i$ to $i+1$ for each $i=2, \ldots, 2k-2$. Clearly, these $k$ paths are all edge-disjoint and they all connect a leaf in state $a$ with a leaf in state $b$. So the maximum number of such paths that can be found in such a character must be at least $k$. However, it is a well-known consequence of Menger's theorem that the maximum number of such paths equals the parsimony score of the given character on the tree under consideration (cf. for instance \citep[Lemma 5.1.7 and Corollary 5.1.8]{Semple2003}). So the parsimony score of any character of $B_k$ on $T_2$ must be at least $k$. However, as above, for a binary character on $n=2k$ taxa, the maximum number of changes required is known to be $n/2=2k/2=k$ \citep{fischer_kelk}. So, again, the parsimony score of any character in $B_k$ on $T_2$ must be exactly $k$, and thus $B_k \subseteq A_k(T_2)$ (and we have already seen that $|B_k|=2^{k-1}$). It remains to apply Theorem \ref{thm:lengthAk} again to see that $|A_k(T_2)|=2^{k-1}$, which immediately leads to $B_k=A_k(T_2)$. 

So altogether we have $A_k(T_1)=B_k=A_k(T_2)$, which completes the proof.

\bibliographystyle{model1-num-names}\biboptions{authoryear}
\bibliography{References}   

\begin{thebibliography}{11}
\expandafter\ifx\csname natexlab\endcsname\relax\def\natexlab#1{#1}\fi
\providecommand{\url}[1]{\texttt{#1}}
\providecommand{\href}[2]{#2}
\providecommand{\path}[1]{#1}
\providecommand{\DOIprefix}{doi:}
\providecommand{\ArXivprefix}{arXiv:}
\providecommand{\URLprefix}{URL: }
\providecommand{\Pubmedprefix}{pmid:}
\providecommand{\doi}[1]{\href{http://dx.doi.org/#1}{\path{#1}}}
\providecommand{\Pubmed}[1]{\href{pmid:#1}{\path{#1}}}
\providecommand{\bibinfo}[2]{#2}
\ifx\xfnm\relax \def\xfnm[#1]{\unskip,\space#1}\fi
\bibitem[{Buneman(1971)}]{Buneman1971}
\bibinfo{author}{P.~Buneman}, \bibinfo{title}{The {R}ecovery of {T}rees from
  {M}easures of {D}issimilarity}, \bibinfo{publisher}{Edinburgh University
  Press}, \bibinfo{year}{1971}, pp. \bibinfo{pages}{387--395}.
\bibitem[{Semple and Steel(2003)}]{Semple2003}
\bibinfo{author}{C.~Semple}, \bibinfo{author}{M.~Steel},
  \bibinfo{title}{Phylogenetics ({O}xford {L}ecture {S}eries in {M}athematics
  and {I}ts {A}pplications)}, \bibinfo{publisher}{Oxford University Press},
  \bibinfo{year}{2003}.
\bibitem[{Goloboff and Wilkinson(2018)}]{pablo}
\bibinfo{author}{P.~A. Goloboff}, \bibinfo{author}{M.~Wilkinson},
\newblock \bibinfo{title}{On defining a unique phylogenetic tree with
  homoplastic characters},
\newblock \bibinfo{journal}{Molecular Phylogenetics and Evolution}
  (\bibinfo{year}{2018}) \bibinfo{pages}{95 -- 101}.
\bibitem[{Fischer and Liebscher(2015)}]{Fischer2015a}
\bibinfo{author}{M.~Fischer}, \bibinfo{author}{V.~Liebscher},
  \bibinfo{title}{On the balance of unrooted trees}, \bibinfo{year}{2015}.
  \href{http://arxiv.org/abs/arXiv:1510.07882}{\tt arXiv:arXiv:1510.07882}.
\bibitem[{Fitch(1971)}]{Fitch}
\bibinfo{author}{W.~M. Fitch},
\newblock \bibinfo{title}{Toward {D}efining the {C}ourse of {E}volution:
  {M}inimum {C}hange for a {S}pecific {T}ree {T}opology},
\newblock \bibinfo{journal}{Systematic Biology} \bibinfo{volume}{20}
  (\bibinfo{year}{1971}) \bibinfo{pages}{406--416}.
\bibitem[{Foulds and Graham(1982)}]{foulds_graham_1982}
\bibinfo{author}{L.~Foulds}, \bibinfo{author}{R.~Graham},
\newblock \bibinfo{title}{The {S}teiner problem in phylogeny is
  {N}{P}-complete.},
\newblock \bibinfo{journal}{Advances in Applied Mathematics}
  \bibinfo{volume}{3} (\bibinfo{year}{1982}) \bibinfo{pages}{43--49}.
\bibitem[{Susko et~al.(2003)Susko, Field, Blouin, and Roger}]{Susko2003}
\bibinfo{author}{E.~Susko}, \bibinfo{author}{C.~Field},
  \bibinfo{author}{C.~Blouin}, \bibinfo{author}{A.~J. Roger},
\newblock \bibinfo{title}{Estimation of rates-across-sites distributions in
  phylogenetic substitution models},
\newblock \bibinfo{journal}{Systematic Biology} \bibinfo{volume}{52}
  (\bibinfo{year}{2003}) \bibinfo{pages}{594--603}.
\bibitem[{Steel(1993)}]{Steel1993paper}
\bibinfo{author}{M.~Steel},
\newblock \bibinfo{title}{Distributions on bicoloured binary trees arising from
  the principle of parsimony},
\newblock \bibinfo{journal}{Discrete Applied Mathematics}
  (\bibinfo{year}{1993}) \bibinfo{pages}{245--261}.
\bibitem[{Steel(2016)}]{Book_Steel}
\bibinfo{author}{M.~Steel}, \bibinfo{title}{Phylogeny: {D}iscrete and {R}andom
  {P}rocesses in {E}volution ({CBMS}-{NSF} {R}egional {C}onference {S}eries)},
  \bibinfo{publisher}{SIAM-Society for Industrial and Applied Mathematics},
  \bibinfo{year}{2016}. \bibinfo{note}{{ISBN} 978-1-611974-47-8}.
\bibitem[{Erd\"os and Sz\'ekely(1992)}]{ErdosSzekely}
\bibinfo{author}{P.~Erd\"os}, \bibinfo{author}{L.~Sz\'ekely},
\newblock \bibinfo{title}{Evolutionary trees: An integer multicommodity
  max-flow--min-cut theorem},
\newblock \bibinfo{journal}{Advances in Applied Mathematics}
  (\bibinfo{year}{1992}) \bibinfo{pages}{375--389}.
\bibitem[{Fischer and Kelk(2016)}]{fischer_kelk}
\bibinfo{author}{M.~Fischer}, \bibinfo{author}{S.~Kelk},
\newblock \bibinfo{title}{On the maximum parsimony distance between
  phylogenetic trees},
\newblock \bibinfo{journal}{Annals of Combinatorics} \bibinfo{volume}{9}
  (\bibinfo{year}{2016}) \bibinfo{pages}{87--113}.

\end{thebibliography}

\end{document}